\documentclass[leqno,a4paper,twoside,11pt]{article}
\usepackage[nointlimits,nosumlimits]{amsmath}
\usepackage{amsfonts,amssymb,amsthm,ifthen}
\usepackage[utf8x]{inputenc}
\usepackage[square,numbers]{natbib}

\addtolength{\textwidth}{2cm}
\addtolength{\evensidemargin}{-2cm}
\addtolength{\voffset}{-1cm}
\addtolength{\textheight}{2cm}

\newtheorem{Thm}{Theorem}[section]

\newtheorem{Lem}[Thm]{Lemma}

\newtheorem{Def}[Thm]{Definition}

\newtheorem{Dlm}[Thm]{Definition and Lemma}

\unitlength=1mm
\theoremstyle{definition}

\newcommand{\arr}[2]{%
  \begin{array}{@{}#1@{}}#2\end{array}}
\newcommand{\abs}[1]{\left| #1 \right|}
\newcommand{\scal}[3][]{\ifthenelse{\equal{#1}{}}{
  \left\langle #2,\,#3 \right\rangle
}{\ifthenelse{\equal{#1}{(}}{
  \left( #2,\,#3 \right)
}{\ifthenelse{\equal{#1}{[}}{
  \left[ #2,\,#3 \right]
}{
  #1\left( #2,\,#3 \right)
}}}}

\newcommand{\setsep}{\;\big|\;}
\newcommand{\dd}[2][]{\frac{\partial #1}{\partial #2}}

\renewcommand{\div}{\mathrm{div}}
\newcommand{\tr}{\mathrm{tr}}
\newcommand{\id}{\mathrm{id}}
\newcommand{\ev}{\mathrm{ev}}
\newcommand{\Der}{\mathrm{Der}}
\newcommand{\Hom}{\mathrm{Hom}}
\newcommand{\str}{\mathrm{str}}
\newcommand{\dvol}{\mathrm{dvol}}
\newcommand{\sdet}{\mathrm{sdet}}
\newcommand{\dsvol}{\mathrm{dsvol}}

\newcommand{\mA}{\mathcal A}
\newcommand{\mD}{\mathcal D}
\newcommand{\mF}{\mathcal F}
\newcommand{\mO}{\mathcal O}
\newcommand{\mS}{\mathcal S}
\newcommand{\bN}{\mathbb N}
\newcommand{\bR}{\mathbb R}
\newcommand{\bZ}{\mathbb Z}
\newcommand{\fgl}{\mathfrak{gl}}

\renewcommand{\title}[1]{\vbox{\center\LARGE{\textsc{#1}}}\vspace{5mm}}
\renewcommand{\author}[1]{\vbox{\center\large{\textsc{#1}}}\vspace{5mm}}
\newcommand{\address}[1]{\vbox{\center\em#1}}
\newcommand{\email}[1]{\vbox{\center\tt#1}\vspace{5mm}}

\begin{document}

\title{Killing Vector Fields and\\ Superharmonic Field Theories}

\author{Josua Groeger$^1$}

\address{Humboldt-Universit\"at zu Berlin, Institut f\"ur Mathematik,\\
  Rudower Chaussee 25, 12489 Berlin, Germany }

\email{$^1$groegerj@mathematik.hu-berlin.de}

\begin{abstract}
\noindent
The harmonic action functional allows a natural generalisation
to semi-Riemannian supergeometry, referred to as superharmonic action,
which resembles the supersymmetric sigma models studied in high energy physics.
We show that Killing vector fields are infinitesimal supersymmetries
of the superharmonic action and prove three different Noether theorems
in this context. En passant, we provide a homogeneous treatment of
five characterisations of Killing vector fields on
semi-Riemannian supermanifolds, thus filling a gap in the literature.
\end{abstract}

\section{Introduction}

Symmetries belong to the main ingredients of modern physical theories.
In high energy physics, supersymmetry is a conjectured transition between
the two types of elementary particles in nature: bosons and fermions, which differ in their statistics.
The observed properties of these particles are, to a good agreement, described by quantum field theories,
see e.g. \cite{PS95} for a standard treatment. The quantum theories are, usually, based on classical field theories. In this context, bosons and fermions
are described by even and odd fields, respectively.
Free bosonic string theory is modelled on the well-known harmonic action \cite{Jos01}, while a class of supersymmetric
extensions is given by so called supersymmetric sigma models \cite{DF99b}.

In this article, we study a similar such extension which, from a mathematical
point of view, is more natural and also more general in that it can be formulated
for every pair of semi-Riemannian supermanifolds. We shall, therefore, refer to the resulting
field theories as superharmonic. In contrast to the aforementioned sigma models, the superharmonic
action always allows a ''superspace formulation'', meaning that all fields occuring can be included
in a single superfield,
a mathematical model of which is a map with flesh \cite{Hel09}.

Killing vector fields are known to be infinitesimal symmetries of the harmonic action.
According to the Noether principle, every symmetry of a classical field theory should induce a
conserved quantity. In the harmonic theory, this is indeed the case (\cite{BE81}, \cite{Hel02}),
while it is a priori not clear for the superharmonic action. Indeed,
whereas certain elements of supergeometry, such as the divergence of a vector field, differ from the classical theory
of manifolds merely by a number of signs, others are non-trivial extensions. Examples for the latter include integration theory
and maps with flesh.

In the present article, we show that Killing vector fields on the domain as well as on the target
space supermanifold are infinitesimal symmetries of the superharmonic action. As our main results,
we formulate and prove three different Noether theorems in this context. This is the subject matter
of Sec. \ref{secSuperharmonic}.
There are several equivalent definitions for a Killing vector field on a semi-Riemannian manifold,
most of which have been generalised to supergeometry (\cite{MSV96}, \cite{Goe08}, \cite{ACDS97}),
while a homogeneous treatment of the subject is yet missing. Another aim of this article is
to fill this gap. In Sec. \ref{secKillingVF}, we prove equivalence of five characterisations of
Killing vector fields. It turns out that the definition used in \cite{ACDS97} is a non-equivalent variation.
To start with, Sec. \ref{secSemiRiemSupermf} reviews elements of the theory of semi-Riemannian supermanifolds
needed later.

\section{Semi-Riemannian Supermanifolds}
\label{secSemiRiemSupermf}

Superharmonic field theories are based on semi-Riemannian supermanifolds. For later
use, we briefly recall the relevant background here.

Throughout the article,
we adopt the Berezin-Kostant-Leites definition of supermanifolds and their morphisms
in terms of sheaves as in \cite{Lei80}. In particular, a supermanifold is a ringed space
$(M,\mO_M)$, and a morphism $\Phi:(M,\mO_M)\rightarrow(N,\mO_N)$ of supermanifolds
consists of two parts $\Phi=(\varphi,\phi)$. We shall occasionally abuse notation
and write $M$ instead of $(M,\mO_M)$. Modern monographs on the general theory of supermanifolds
include \cite{Var04} and \cite{CCF11} while aspects of Riemannian supergeometry are studied in \cite{Goe08}.
References for more specialised topics will be given at suitable positions throughout the text.

Following the conventions of \cite{Gro11a}, we denote the (super) tangent sheaf,
i.e. the sheaf of superderivations of $\mO_M$, by $\mS M:=\mathrm{Der}(\mO_M)$.
The (super) tangent space at a point $p\in M$ is defined by
\begin{align*}
S_pM:=\{v:(\mO_M)_p\rightarrow\bR\setsep\bR\mathrm{-linear}\,,\;v(fg)=v(f)\tilde{g}(p)+(-1)^{\abs{v}\abs{f}}\tilde{f}(p)v(g)\}
\end{align*}
where $(\mO_M)_p$ is the stalk of $\mO_M$ at $p$ and tilde denotes the canonical projection
$\mO_M\rightarrow C^{\infty}_M\,,\;f\mapsto\tilde{f}$ by evaluation $\tilde{f}(p)=\ev|_pf$.
Any vector field $X\in\mS M$ gives rise to the tangent space valued map $p\mapsto X(p)\in S_pM$ via
\begin{align*}
X(p)(f):=\widetilde{X(f)}(p)\;,\qquad f\in(\mO_M)_p
\end{align*}
Note the use of the shorthand notation $X\in\mS M$, meaning $X\in\mS M(U)$ for $p\in U\subseteq M$.
Similarly, we yield a canonical (classical) vector field $\tilde{X}\in\Gamma(TM)$ on $M$ by setting
$\tilde{X}(\tilde{f}):=\widetilde{X_0(f)}$. It is well-known that superfunctions $f$ are not detemined
by their values $\tilde{f}(p)$ and, likewise, vector fields $X$ are not determined by their
values $X(p)$.
With respect to local coordinates $(\xi^1,\ldots,\xi^n,\xi^{n+1},\ldots,\xi^{n+m})$ of $M$, the tuple
$(\dd{\xi^1},\ldots,\dd{\xi^{n+m}})$ is a local $\mO_M$-basis of $\mS M$ which is adapted in the sense
that the first $n$ vector fields are even and the remaining $m$ ones are odd.
Likewise, the tuple $(\dd{\xi^1}(p),\ldots,\dd{\xi^{n+m}}(p))$ is an adapted basis of the super vector space $S_pM$.

Tensor calculus on supermanifolds is based on superlinear algebra. For $U\subseteq M$ sufficiently small,
$\mS M(U)$ is a free $\mO_M(U)$-supermodule of rank $n|m=\dim M$.
In general, consider a supercommutative superalgebra $A$ and free $A$-supermodules $M$ and $N$ of
rank $m|n$ and $r|s$, respectively.
With respect to adapted right bases $(f_1,\ldots,f_{n+m})$ and $(g_1,\ldots,g_{r+s})$ of $M$ and $N$, respectively,
any superlinear map $L:M\rightarrow N$ can be identified with a matrix $L\in\mathrm{Mat}_A(m|n,r|s)$.
Similarly we denote, for an even (super-)bilinear form $B\in\Hom_A(M\otimes_A M,A)$, the corresponding matrix
with entries $B_{jk}:=B(f_j,f_k)$ by the same symbol $B\in\mathrm{Mat}(n|m,A)$.

Let $GL_{n|m}(A)$ denote the group of even and invertible $n|m$-matrices with entries in $A$.
The orthosymplectic group of dimension $(t,s)|2m$ is defined as follows.
\begin{align*}
OSp_{(t,s)|2m}(A):=\{L\in GL_{t+s|2m}(A)\setsep\forall v,w\in A^{t+s|2m}:\,g_0(Lv,Lw)=g_0(v,w)\}
\end{align*}
where $g_0$ denotes the standard supermetric which, with respect to the standard basis for $A^{t+s|2m}$,
is given by the matrix
\begin{align}
\label{eqnStandardMetric}
\renewcommand{\arraystretch}{1.5}
g_0&:=\left(\arr{@{\;}c@{\;}|@{\;}c@{\;}}
{G_{t,s}&0\\\hline 0&J_{2m}}\right)\quad\mathrm{where}\\
G_{t,s}&:=\left(\arr{cccc}{-1_{t\times t}&0\\0&1_{s\times s}}\right)\;,\quad
J_{2m}:=\left(\arr{ccc}{J_2&0&0\\0&\ddots&0\\0&0&J_2}\right)\;,\quad
J_2:=\left(\arr{cc}{0&-1\\1&0}\right)\nonumber
\end{align}
The corresponding super Lie algebra is
\begin{align*}
\mathfrak{osp}_{(t,s)|2m}(A):=\{L\in\fgl_{t+s|2m}(A)\setsep \scal[g_0]{Lv}{w}=-(-1)^{\abs{L}\abs{v}}\scal[g_0]{v}{Lw}\}
\end{align*}
where $\fgl_{n|m}(A):=\fgl_{n|m}\otimes A$ is the super Lie algebra of all matrices with entries in $A$.
For $A=\bR$, we define $\mathfrak{osp}_{(t,s)|2m}:=\mathfrak{osp}_{(t,s)|2m}(\bR)$.
By means of choosing a basis $\{T^j\}$ of $\mathfrak{osp}_{(t,s)|2m}$, it is easy to
see that $\mathfrak{osp}_{(t,s)|2m}(A)=\mathfrak{osp}_{(t,s)|2m}\otimes A$.

Now let $g$ be an even (i.e. parity-preserving), nondegenerate and supersymmetric bilinear form (a supermetric for short).
By an extension of the Gram-Schmidt procedure as detailed e.g. in Sec. 2.8 of \cite{DeW84},
there is an adapted basis $(e_1,\ldots,e_{t+s+2m})$ of $M$,
such that $g=g_0$ on the level of matrices, which we shall call an $OSp_{(t,s)|2m}$-basis.
In particular, the values of $t,s,m\in\bN$ are uniquely determined by $g$.
As in Sec. 3.5 of \cite{Han12} with slightly different conventions, we introduce the even map $J$, which is
defined with respect to an $OSp_{(t,s)|2m}$-basis $\{e_i\}$ as follows.
\begin{align*}
Je_k:=\left\{\arr{ll}{-e_k&k\leq t\\e_k&t<k\leq t+s\\e_{k+1}&k=t+s+2l-1\\-e_{k-1}&k=t+s+2l}\right.
\end{align*}
This is such that $\scal[g]{e_k}{Je_j}=(-1)^{\abs{e_k}}\delta_{kj}$ and $Je_k=(-1)^{\abs{e_k}}h_{km}e_m$
and, moreover, every $v\in M$ has the expansion
\begin{align*}
v=\scal[g]{v}{e_j}Je_j=(-1)^{\abs{e_j}}\scal[g]{v}{Je_j}e_j
\end{align*}

$g$ identifies any other bilinear form $K\in\Hom_A(M\otimes_A M,A)$ with a superlinear map $\tilde{K}:M\rightarrow M$
of the same parity as $K$ via
\begin{align*}
\scal[g]{\tilde{K}(v)}{w}=\scal[K]{v}{w}
\end{align*}
The supertrace of $K$ with respect to $g$ is defined as
\begin{align*}
\str_gK:=\str\tilde{K}=(-1)^{\abs{f_j}(\abs{\tilde{K}}+1)}\tilde{K}^j_{\phantom{j}j}
\end{align*}
where the second equation holds for any adapted (right) basis $\{f_j\}$ of $M$ upon identifying
$\tilde{K}$ with a supermatrix.
An explicit calculation then shows that, regardless of the parity of $K$,
the following formula holds for every $OSp_{(t,s)|2m}$-basis $f_j=e_j$.
\begin{align}
\label{eqnStraceMetric}
\str_gK=\scal[K]{e_j}{Je_j}=(-1)^{\abs{e_j}}\scal[K]{Je_j}{e_j}
\end{align}

We shall use the preceding superlinear algebra for the tensor calculus on a supermanifold $(M,\mO_M)$.
Let $\mathrm{End}_{\mO_M}(\mS M)$ denote the sheaf of superlinear endomorphisms.
As usual, we write $E\in\mathrm{End}_{\mO_M}(\mS M)$ for a sheaf morphism $E=\{E_U\}_{\stackrel{U\subseteq M}{\mathrm{open}}}$
which, by a slight abuse of notation, we shall call a \emph{section} of the sheaf of endomorphisms.
Let, similarly, $\Hom_{\mO_M}(\mS M,\mO_M)$ and $\Hom_{\mO_M}(\mS M\otimes_{\mO_M}\mS M,\mO_M)$ denote
the sheaves of superlinear maps and superbilinear maps, respectively.
We identify the tensor product of two one-forms $F,G\in\Hom_{\mO_M}(\mS M,\mO_M)$ with a bilinear form via
\begin{align*}
(F\otimes G)(X,Y):=(-1)^{\abs{G}\abs{X}}F(X)\cdot G(Y)
\end{align*}
which can be taken as a definition \cite{DM99}.
Sections of any sheaf of multilinear forms will be commonly denoted as \emph{tensors}
or \emph{tensor fields}.
The differential $df$ of a superfunction $f\in\mO_M$ is defined by the formula $df[X]:=(-1)^{\abs{f}\abs{X}}X(f)$
for $X\in\mS M$. In particular, we may consider the differential of coordinate functions $\xi^i$.
Our sign conventions are such that any even bilinear form $B\in\Hom_{\mO_M}(\mS M\otimes_{\mO_M}\mS M,\mO_M)$
has the local form
\begin{align}
\label{eqnLocalBilinear}
B=(-1)^{\abs{\xi^i}+\abs{\xi^j}+\abs{\xi^i}\abs{\xi^j}}B_{ij}\cdot d\xi^i\otimes d\xi^j
\end{align}
An even bilinear form $g$ which is non-degenerate and (super-)symmetric is called a \emph{semi-Riemannian supermetric}.
Locally, the above treatment on superlinear algebra applies here. In particular, there exists an $OSp_{(t,s)|2m}$-basis
$\{e_j\}$ with $t$, $s$ and $m$ being intrinsic invariants of $g$.

Now consider a morphism $\Phi=(\varphi,\phi):(M,\mO_M)\rightarrow(N,\mO_N)$ of supermanifolds.
Its \emph{differential} is the morphism of sheaves
\begin{align}
\label{eqnDifferential}
d\Phi:\varphi_*\mS M\rightarrow\mS\Phi\;,\qquad d\Phi(Y):=Y\circ\phi
\end{align}
where $\mS\Phi:=\Der(\mO_N,\varphi_*\mO_M)$ denotes the sheaf of derivations (vector fields) along $\Phi$,
which is locally free of rank the dimension of $N$.
The pullback of tensors on $N$ is defined as follows. Following the conventions
used in \cite{Gro11a}, let $E\in\mathrm{End}_{\mO_N}(\mS N)$ and $F\in\Hom_{\mO_N}(\mS N,\mO_N)$
and $B\in\Hom_{\mO_N}(\mS N\otimes_{\mO_N}\mS N,\mO_N)$ be tensor fields.
Now, prescribing
\begin{align*}
E_{\Phi}(\phi\circ Y):=\phi\circ E(Y)\;,\quad
F_{\Phi}(\phi\circ Y):=\phi\circ F(Y)\;,\quad
\scal[B_{\Phi}]{\phi\circ Y}{\phi\circ Z}:=\phi\circ\scal[B]{Y}{Z}
\end{align*}
for $Y,Z\in\mS N$, together with super(bi)linear extensions for general sections of $\mS\Phi$, yields
well-defined sections $E_{\Phi}\in\mathrm{End}_{\varphi_*\mO_M}(\mS\Phi)$ and
$F_{\Phi}\in\mathrm{Hom}_{\varphi_*\mO_M}(\mS\Phi,\varphi_*\mO_M)$ and
$B_{\Phi}\in\Hom_{\varphi_*\mO_M}(\mS\Phi\otimes_{\varphi_*\mO_M}\mS\Phi,\varphi_*\mO_M)$, respectively.

A \emph{connection} on $N$ is an even $\bR$-linear sheaf morphism $\nabla:\mS N\rightarrow\mS^*N\otimes_{\mO_N}\mS N$
satisfying the (graded) Leibniz rule. If $(N,g)$ is a semi-Riemannian supermanifold, there is a unique
connection which is (graded) metric and torsion-free, called the Levi-Civita connection \cite{Goe08}.
In general, for a connection $\nabla$ on $N$, there is a canonical pullback connection
$\nabla_{\Phi}:\mS\Phi\rightarrow\varphi_*\mS^*M\otimes_{\varphi_*\mO_M}\mS\Phi$ (see \cite{GW12}),
that, in the following, we shall denote simply by $\nabla$.
With respect to local coordinates $\{\eta^i\}$ on $N$, it reads
\begin{align}
\label{eqnPullbackConnection}
\nabla_X\left((\phi\circ\partial_{\eta^j})\cdot Y^j\right)
=(-1)^{\abs{X}\abs{\eta^j}}(\phi\circ\partial_{\eta^j})\cdot X(Y^j)
+X(\phi\circ\eta^i)\cdot(\phi\circ\nabla_{\partial_{\eta^i}}\partial_{\eta^j})\cdot Y^j
\end{align}

\begin{Lem}
\label{lemMetricPullback}
Let $(N,g)$ be a semi-Riemannian supermanifold and $\nabla$ a superconnection on $N$ which is metric.
Then the pullback $\nabla_{\Phi}$ is metric in the following sense.
\begin{align*}
X\scal[g_{\Phi}]{Y}{Z}=\scal[g_{\Phi}]{(\nabla_{\Phi})_X Y}{Z}+(-1)^{\abs{X}\abs{Y}}\scal[g_{\Phi}]{Y}{(\nabla_{\Phi})_X Z}
\end{align*}
holds true for every $X\in\varphi_*\mS M$ as well as $Y,Z\in\mS\Phi$.
\end{Lem}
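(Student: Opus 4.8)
The plan is to reduce the identity to a local statement on generators and then transport the metric property of $\nabla$ on $N$ along $\Phi$ by a chain rule. Since metric compatibility is local on $N$, I would fix coordinates $\{\eta^i\}$ and work with the basic sections $G_i:=\phi\circ\partial_{\eta^i}$, which generate $\mS\Phi$ as a $\varphi_*\mO_M$-module. Introducing the defect
\begin{align*}
D(X;Y,Z):=X\scal[g_{\Phi}]{Y}{Z}-\scal[g_{\Phi}]{(\nabla_{\Phi})_XY}{Z}-(-1)^{\abs{X}\abs{Y}}\scal[g_{\Phi}]{Y}{(\nabla_{\Phi})_XZ}
\end{align*}
the claim is that $D\equiv0$. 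First I would show that $D$ is $\varphi_*\mO_M$-bilinear in $(Y,Z)$: when a coefficient function $f$ is moved past $Y$, the term produced by $X$ acting on $f$ (via the Leibniz rule for the derivation $X$) is cancelled exactly by the term the pullback connection produces through its own Leibniz rule, cf. \eqref{eqnPullbackConnection}. Granting this tensoriality, it suffices to verify $D(X;G_i,G_j)=0$.

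For the generator computation, note that by definition of the pullback metric $\scal[g_{\Phi}]{G_i}{G_j}=\phi\circ g_{ij}$ with $g_{ij}:=\scal[g]{\partial_{\eta^i}}{\partial_{\eta^j}}$, while \eqref{eqnPullbackConnection} applied with constant coefficient $Y^j=\delta^j_i$ gives
\begin{align*}
(\nabla_{\Phi})_XG_i=X(\phi\circ\eta^k)\cdot(\phi\circ\nabla_{\partial_{\eta^k}}\partial_{\eta^i})\;.
\end{align*}
Writing $\nabla_{\partial_{\eta^k}}\partial_{\eta^i}$ in terms of Christoffel symbols and pulling the $\varphi_*\mO_M$-coefficients out of the graded-bilinear form $g_{\Phi}$, the two connection terms of $D(X;G_i,G_j)$ collapse to $X(\phi\circ\eta^k)$ times the $\phi$-image of $\scal[g]{\nabla_{\partial_{\eta^k}}\partial_{\eta^i}}{\partial_{\eta^j}}+(-1)^{\abs{\eta^k}\abs{\eta^i}}\scal[g]{\partial_{\eta^i}}{\nabla_{\partial_{\eta^k}}\partial_{\eta^j}}$.

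To finish, I would invoke that $\nabla$ is metric on $N$, so this last bracket equals $\partial_{\eta^k}g_{ij}$. Applying the algebra morphism $\phi$ and multiplying by $X(\phi\circ\eta^k)$, the chain rule $X(\phi\circ g_{ij})=X(\phi\circ\eta^k)\cdot\phi\circ(\partial_{\eta^k}g_{ij})$ — valid because $f\mapsto X(\phi\circ f)$ is a derivation along $\Phi$ and is therefore determined by its values on the $\eta^k$ — identifies the outcome with $X\scal[g_{\Phi}]{G_i}{G_j}$, i.e. $D(X;G_i,G_j)=0$.

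I expect the main obstacle to be the sign bookkeeping rather than the structural steps. Moving the factors $X(\phi\circ\eta^k)$ and the pulled-back Christoffel coefficients in and out of $g_{\Phi}$, together with the Leibniz rules for $X$ and $(\nabla_{\Phi})_X$ and the chain rule above, all generate Koszul signs carrying the parities $\abs{X}$, $\abs{\eta^k}$ and $\abs{\eta^i}$. The crux is to check that these reproduce precisely the sign $(-1)^{\abs{\eta^k}\abs{\eta^i}}$ of the metric identity downstairs and the sign $(-1)^{\abs{X}\abs{Y}}$ in the statement; concretely, the factor $X(\phi\circ\eta^k)$ carries the parity $\abs{X}$, and reordering it to the front of the right-hand slot is what absorbs the prefactor $(-1)^{\abs{X}\abs{\eta^i}}$ and leaves only the coordinate-parity sign. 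By graded linearity it is enough to treat $X$, $Y$, $Z$ and the coordinates as homogeneous, which turns the verification into a careful but routine parity count.
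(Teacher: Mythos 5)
Your proposal is correct and is essentially the detailed version of the paper's own (one-line) proof, which likewise reduces the claim to a straightforward local-coordinate computation using the pullback connection formula \eqref{eqnPullbackConnection}; your tensoriality-of-the-defect step, the generator computation with $\phi\circ\partial_{\eta^i}$, the metric identity downstairs, and the sign-free chain rule $X(\phi\circ g_{ij})=X(\phi\circ\eta^k)\cdot\phi\circ(\partial_{\eta^k}g_{ij})$ all check out, including the cancellation $(-1)^{\abs{X}\abs{\eta^i}}(-1)^{(\abs{X}+\abs{\eta^k})\abs{\eta^i}}=(-1)^{\abs{\eta^k}\abs{\eta^i}}$ you flag as the crux.
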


\begin{proof}
This follows from a straightforward calculation in local coordinates using (\ref{eqnPullbackConnection}).
\end{proof}

There is a second type of pullback for tensors and functions $f\in\mO_N$ on $N$, which yields
corresponding objects on $M$. We define
\begin{align}
\Phi^*f:=\phi(f)\;,\qquad
\Phi^*F(X):=F_{\Phi}(d\Phi[X])\;,\qquad
\scal[\Phi^*B]{X}{Y}:=\scal[B_{\Phi}]{d\Phi[X]}{d\Phi[Y]}
\end{align}
such that
\begin{align*}
\Phi^*B\in\Hom_{\varphi_*\mO_M}(\varphi_*\mS M\otimes\varphi_*\mS M,\varphi_*\mO_M)
\cong\Hom_{\mO_M}(\mS M\otimes\mS M,\mO_M)
\end{align*}
and analogous for other tensors. The canonical identification with an (ordinary) tensor on $M$ stated
follows from the general theory of ringed spaces, e.g. from Thm. 4.4.14 of \cite{Ten75} applied
to the induced morphism $(\varphi,\id):(M,\mO_M)\rightarrow(N,\varphi_*\mO_M)$ of ringed spaces.
A direct calculation yields
\begin{align}
\label{eqnPullbackMultiplicationTensor}
\Phi^*(f\cdot F)=(\Phi^*f)\cdot(\Phi^*F)\;,\qquad
\Phi^*(F\otimes G)=\Phi^*F\otimes\Phi^*G
\end{align}
Moreover, one verifies that, in case $\Phi$ is a diffeomorphism, it holds
\begin{align*}
\scal[\Phi^*B]{X}{Y}&=\phi\circ\scal[B]{(\phi^{-1})\circ X\circ\phi}{(\phi^{-1})\circ Y\circ\phi}
\end{align*}
which is the definition of the pullback in \cite{Goe08}. The next lemma will be needed in calculations
below.

\begin{Lem}
\label{lemPullbackFunctions}
Let $f\in\mO_N$ be a superfunction and $Y\in\mS M$ be a super vector field. Then
\begin{align*}
\Phi^*df[Y]=(-1)^{\abs{f}\abs{Y}}d\Phi[Y](f)
\end{align*}
\end{Lem}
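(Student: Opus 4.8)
The plan is to unfold both pullback operations and thereby reduce the claimed identity to the chain rule for the morphism $\Phi$, the whole computation being local in coordinates $\{\eta^i\}$ on $N$. By the definition $\Phi^*F(X):=F_{\Phi}(d\Phi[X])$ of the tensor pullback applied to the one-form $F=df$, we have $\Phi^*df[Y]=(df)_{\Phi}(d\Phi[Y])$, so everything comes down to evaluating the pulled-back one-form $(df)_{\Phi}$ on the derivation $d\Phi[Y]=Y\circ\phi\in\mS\Phi$ furnished by (\ref{eqnDifferential}).

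First I would expand $d\Phi[Y]$ in the local frame $\{\phi\circ\partial_{\eta^i}\}$ of $\mS\Phi$. Writing $d\Phi[Y]=(\phi\circ\partial_{\eta^i})\cdot c^i$ and evaluating both sides on $\eta^j$ pins down the coefficients $c^j$ in terms of $d\Phi[Y](\eta^j)=Y(\phi(\eta^j))$, up to a sign coming from the right $\varphi_*\mO_M$-module structure on $\mS\Phi$. Then, using that $(df)_{\Phi}$ is a right $\varphi_*\mO_M$-module morphism together with its defining property $(df)_{\Phi}(\phi\circ\partial_{\eta^i})=\phi\circ df(\partial_{\eta^i})$ and the definition $df[\partial_{\eta^i}]=(-1)^{\abs{f}\abs{\eta^i}}\partial_{\eta^i}(f)$, the left-hand side takes the local form $\sum_i(\pm)\,\phi(\partial_{\eta^i}f)\cdot c^i$.

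For the right-hand side I would use the chain rule for $\Phi$, namely $(Y\circ\phi)(f)=Y(\phi(\eta^i))\cdot\phi(\partial_{\eta^i}f)$, which recovers the derivation $Y\circ\phi$ from its values on the coordinates and follows from the graded Leibniz rule applied to the local expansion of $f$ in the $\eta^i$. Substituting this into $(-1)^{\abs{f}\abs{Y}}d\Phi[Y](f)$ and reordering the factors $Y(\phi(\eta^i))$ and $\phi(\partial_{\eta^i}f)$ into the order appearing on the left-hand side produces a second local expression of the same shape.

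Comparing the two, the proof reduces to verifying that the accumulated Koszul signs agree summand by summand. The hard part is exactly this sign bookkeeping: one must track the factor $(-1)^{\abs{f}\abs{\eta^i}}$ built into the definition of $df$, the sign from the right-module structure on $\mS\Phi$, and the reordering sign from supercommutativity of $\varphi_*\mO_M$. These combine so that all summand-dependent contributions $(-1)^{\abs{\eta^i}(\cdots)}$ cancel and leave precisely the global sign $(-1)^{\abs{f}\abs{Y}}$; conceptually, the lemma is nothing but the chain rule repackaged through the pullback formalism and the sign convention for the differential.
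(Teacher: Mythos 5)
Your proposal is correct, and its first half coincides with the paper's own proof: both unfold $\Phi^*df[Y]=df_{\Phi}(d\Phi[Y])$, expand $d\Phi[Y]=(\phi\circ\partial_{\eta^i})\cdot c^i$ in the local frame of $\mS\Phi$, and apply the defining property $df_{\Phi}(\phi\circ\partial_{\eta^i})=\phi\circ df[\partial_{\eta^i}]$ together with $df[\partial_{\eta^i}]=(-1)^{\abs{f}\abs{\eta^i}}\partial_{\eta^i}(f)$, arriving at $(-1)^{\abs{f}\abs{\eta^i}}\phi(\partial_{\eta^i}f)\cdot c^i$. The two arguments part ways in the final step. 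The paper never computes the coefficients $c^i$ and never invokes the chain rule: it simply pushes the coefficients back inside the derivation via the module sign rule $\left((\phi\circ\partial_{\eta^i})\cdot c^i\right)(f)=(-1)^{\abs{f}\abs{c^i}}(\phi\circ\partial_{\eta^i})(f)\cdot c^i$, so that the sum recombines to $d\Phi[Y](f)$ with total sign $(-1)^{\abs{f}\abs{\eta^i}+\abs{f}(\abs{Y}+\abs{\eta^i})}=(-1)^{\abs{f}\abs{Y}}$. You instead determine the $c^i$ explicitly from $d\Phi[Y](\eta^j)=Y(\phi(\eta^j))$, expand the right-hand side independently via the super chain rule $(Y\circ\phi)(f)=Y(\phi(\eta^i))\cdot\phi(\partial_{\eta^i}f)$, and match coefficients; your sign bookkeeping does close up (the summand-dependent contributions cancel mod $2$ and leave exactly $(-1)^{\abs{f}\abs{Y}}$). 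What the paper's recombination buys is self-containedness: its only nontrivial input is local freeness of $\mS\Phi$ over the frame $\{\phi\circ\partial_{\eta^i}\}$, which is already stated in Sec. \ref{secSemiRiemSupermf}. Your route additionally imports the chain rule as an external fact, and your justification of it (''graded Leibniz rule applied to the local expansion of $f$'') is the one weak point: for smooth dependence on the even coordinates this requires a Hadamard/Taylor-with-remainder argument, not the Leibniz rule alone. Note, finally, that within this formalism the chain rule is precisely what one obtains by applying your own expansion $d\Phi[Y]=(\phi\circ\partial_{\eta^i})\cdot c^i$ to the function $f$ --- which is exactly the paper's recombination step --- so your proof carries a harmless redundancy: its second half re-derives what its first half already provides.
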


\begin{proof}
The assertion holds by the following calculation in coordinates $\{\eta^i\}$ on $N$.
\begin{align*}
\Phi^*df[Y]&=df_{\Phi}(d\Phi[Y])\\
&=df_{\Phi}\left((\phi\circ\partial_{\eta^i})d\Phi[Y]^i\right)\\
&=\phi\circ df[\partial_{\eta_i}]\cdot d\Phi[Y]^i\\
&=(-1)^{\abs{f}\abs{\eta_i}}(\phi\circ\partial_{\eta_i})(f)\cdot d\Phi[Y]^i\\
&=(-1)^{\abs{f}\abs{\eta_i}}(-1)^{\abs{f}(d\Phi[Y]^i)}\left((\phi\circ\partial_{\eta_i})\cdot d\Phi[Y]^i\right)(f)\\
&=(-1)^{\abs{f}\abs{Y}}d\Phi[Y](f)
\end{align*}
\end{proof}

The matrix groups $GL_{n|m}(A)$ and $OSp_{(t,s)|2m}(A)$ give rise to super Lie groups
$GL_{n|m}$ and $OSp_{(t,s)|2m}$, respectively. They are examples of matrix super Lie groups
and, for the treatment of $G$-structures below, are most conveniently described in terms
of $S$-points.

In general, to a supermanifold $M$ we
can associate its functor of points $M(\cdot):\mathrm{SMan}^{\mathrm{Op}}\rightarrow\mathrm{Set}$
by sending any supermanifold $S$ to its \emph{$S$-point} $M(S):=\Hom(S,M)$ and any morphism $\Phi:T\rightarrow S$
to $M(\Phi):M(S)\rightarrow M(T)$ through $m\mapsto m\circ\Phi$.
By Yoneda's lemma, morphisms $\psi:M\rightarrow N$ are in bijection with
natural transformations $\psi(\cdot):M(\cdot)\rightarrow N(\cdot)$.
This induces a canonical embedding of the category $\mathrm{SMan}$ into the functor category
$[\mathrm{SMan}^{\mathrm{Op}},\mathrm{Set}]$, elements of which are called \emph{generalised supermanifolds},
and those in the image of the embedding are called \emph{representable}.
A \emph{generalised super Lie group} is an object of the functor category
$[\mathrm{SMan}^{\mathrm{Op}},\mathrm{Grp}]$ (which canonically embeds into $[\mathrm{SMan}^{\mathrm{Op}},\mathrm{Set}]$).
The representing supermanifold, if existing, is called a \emph{super Lie group} which,
equivalently, can also be defined as a group object in the category of supermanifolds.

$GL_{n|m}$ is the functor which sends any supermanifold $S$ to the
multiplicative group $GL_{n|m}(\mO_S(S))$ of even and invertible $n|m$-matrices with values in $\mO_S(S)$
and any morphism $T\rightarrow S$ to the induced map $GL_{n|m}(\mO_S(S))\rightarrow GL_{n|m}(\mO_T(T))$.
This functor is representable such that, on the level of $S$-points, multiplication is ordinary matrix multiplication.
The same construction can be applied to natural subgroups of $GL_{n|m}(A)$.
For example, the generalised super Lie group $OSp_{(t,s)|2m}$ is the functor defined through
$S\mapsto OSp_{(t,s)|2m}(\mO_S(S))$ and the corresponding morphism map.
It is representable by the stabliliser condition \cite{BCF09}.
For our applications, it will be sufficient to consider all super Lie groups occuring as
generalised. This is also the point of view of \cite{ACDS97}.

We shall next describe frame fields and $G$-structures on a supermanifold, where $G$ is
a super Lie group. In particular, we will see that supermetrics are equivalent to $OSp_{(t,s)|2m}$-structures.

A tuple $(X_1,\ldots,X_n,X_{n+1},\ldots,X_{n+m})$ of $n$ even and $m$ odd vector fields
on an open set $U\subseteq M$ is called a \emph{frame field}, if the tuple $(X_1(p),\ldots,X_{n+m}(p))$
of tangent vectors is a basis of the supervector space $T_pM$ for every $p\in U$.
If $U$ is sufficiently small (such that it lies in a coordinate chart), this condition
is equivalent to $(X_1,\ldots,X_{n+m})$ being an adapted $\mO_M(U)$-basis.
The frame fields form a sheaf on $M$, denoted $U\mapsto\mF(U)$.

The super Lie group $GL_{n|m}$ induces a sheaf of groups over $M$ via $GL_{n|m}(U):=GL_{n|m}(\mO_M(U))$,
where the latter corresponds to a morphism $U\rightarrow GL_{n|m}$ of supermanifolds.
$GL_{n|m}(U)$ acts naturally from the right on the frame fields $\mF(U)$. Explicitly,
\begin{align}
\label{eqnRightAction}
(X_1,\ldots,X_{n+m})\cdot A:=\left(\sum_i X_i\cdot A_{i1},\ldots,\sum_i X_i\cdot A_{in}\right)
\end{align}
where $A_{jk}$ denotes the $jk$-th entry of the matrix of $A\in GL_{n|m}(U)$.
If $U$ is sufficiently small, we see that this action is simply transitive.

A supermetric $g$ defines the subsheaf of $OSp_{(t,s)|2m}$-frames as follows.
Denoting the standard basis of $A^{t+s|2m}$ by $\{e_i\}$, we set
\begin{align*}
\mF_g(U):=\{(X_1,\ldots)\in\mF(U)\setsep g(X_i,X_j)=g_0(e_i,e_j)\}
\end{align*}
The orthosymplectic supergroup $OSp_{(t,s)|2m}$ acts on $\mF_g$ via (\ref{eqnRightAction}).
We already know that, given two frames $(X_1,\ldots),(Y_1,\ldots)\in\mF_g(U)$ for $U$ sufficiently small,
there is a unique $A\in GL_{t+s|2m}(U)$ such that $(X_1,\ldots)\cdot A=(Y_1,\ldots)$.
But this is exactly the condition for $A\in OSp_{(t,s)|2m}$.

\begin{Def}[\cite{ACDS97}]
Let $G\subseteq GL_{n|m}$ be a Lie subgroup. A \emph{$G$-structure} on a supermanifold $M$ is a sheaf
$\mF_G$ of subsets $\mF_G\subseteq\mF$ such that $G(U)$ acts on $\mF_G(U)$ and for all points,
there is a neighbourhood for which the action is simply transitive.
\end{Def}

Thus, in particular, every supermetric $g$ defines an $Osp_{(t,s)|2m}$-structure $\mF_g$. Conversely, assume that
$\mF_{OSp}$ is an $Osp_{(t,s)|2m}$-structure. We construct a supermetric $g$ as follows. Let $(X_1,\ldots)\in\mF_{OSp}$ and
define $g$ by $g(X_i,X_j):=g_0(e_i,e_j)$ and superbilinear extension. This definition does not depend on the
chosen frame field in $\mF_{OSp}$ and is such that $\mF_g=\mF_{OSp}$.

\begin{Lem}
\label{lemSupermetricOSpStructure}
There is a bijection between supermetrics and $OSp_{(t,s)|2m}$-structures.
\end{Lem}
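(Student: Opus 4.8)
The plan is to assemble the two assignments already sketched in the paragraphs preceding the statement and to verify that they are mutually inverse. The forward map sends a supermetric $g$ to its sheaf $\mF_g$ of frames on which $g$ takes the standard form $g_0$; this is an $OSp_{(t,s)|2m}$-structure by the discussion following (\ref{eqnRightAction}), and it is locally nonempty by the Gram--Schmidt argument (Sec. 2.8 of \cite{DeW84}) that produces an $OSp_{(t,s)|2m}$-basis. The backward map sends an $OSp_{(t,s)|2m}$-structure $\mF_{OSp}$ to the form determined locally by $g(X_i,X_j):=g_0(e_i,e_j)$ on a frame $(X_1,\ldots)\in\mF_{OSp}(U)$ together with superbilinear extension. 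Showing these two maps are inverse to each other will establish the bijection.

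First I would check that the backward map is well defined. On a sufficiently small $U$ the prescription $g(X_i,X_j):=g_0(e_i,e_j)$ and superbilinearity determine a bilinear form on $\mS M(U)$, and it is a supermetric there: it is even and supersymmetric because $g_0$ is, and nondegenerate because $(X_1,\ldots)$ is an $\mO_M(U)$-basis while $g_0$ is invertible. The essential point is independence of the chosen frame. Two frames in $\mF_{OSp}(U)$ differ by a unique $A\in OSp_{(t,s)|2m}(U)$ via (\ref{eqnRightAction}), and the defining condition $g_0(Av,Aw)=g_0(v,w)$ of the orthosymplectic group is precisely what forces the two resulting local forms to agree. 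This frame-independence then permits gluing the local supermetrics into one global supermetric, which I expect to be the main technical obstacle, as it requires matching the local data on overlaps and confirming that the invariants $t,s,m$ are locally constant, hence constant, over $M$.

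Next I would verify that the two maps invert one another. That $\mF_{OSp}\mapsto g\mapsto\mF_g$ returns $\mF_{OSp}$ is essentially the last sentence before the statement: each frame of $\mF_{OSp}(U)$ satisfies $g(X_i,X_j)=g_0(e_i,e_j)$ and so lies in $\mF_g(U)$; picking such a frame $X$ and using that $OSp_{(t,s)|2m}(U)$ acts simply transitively on $\mF_g(U)$ and acts on $\mF_{OSp}(U)$ gives $X\cdot OSp_{(t,s)|2m}(U)=\mF_g(U)$ and $X\cdot OSp_{(t,s)|2m}(U)\subseteq\mF_{OSp}(U)\subseteq\mF_g(U)$, upgrading the inclusion to equality. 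Conversely, for a supermetric $g$ every frame $(X_1,\ldots)\in\mF_g(U)$ satisfies $g(X_i,X_j)=g_0(e_i,e_j)$ by the very definition of $\mF_g$, so the reconstructed form agrees with $g$ on a basis and, by superbilinearity, equals $g$ on all of $\mS M(U)$; since this holds on an open cover, the reconstruction returns $g$ globally. Combining both directions yields the claimed bijection.
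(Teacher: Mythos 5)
Your proposal is correct and takes essentially the same route as the paper: the paper's proof is precisely the discussion preceding the lemma, namely the forward assignment $g\mapsto\mF_g$ (local existence of frames by Gram--Schmidt, simple transitivity of the $OSp_{(t,s)|2m}(U)$-action) and the backward assignment $\mF_{OSp}\mapsto g$ by $g(X_i,X_j):=g_0(e_i,e_j)$ with frame-independence and $\mF_g=\mF_{OSp}$ asserted. You simply make explicit the verifications the paper leaves implicit (frame-independence via the orthosymplectic condition, gluing, and the two compositions being the identity), which is a faithful elaboration rather than a different argument.
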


An automorphism of $M$ is an isomorphism $\Phi:M\rightarrow M$ of supermanifolds.
Its differential (\ref{eqnDifferential}) can be identified with a sheaf morphism
$d\Phi:\mS M\rightarrow\varphi_*^{-1}\mS M$ via $d\Phi(Y):=\phi^{-1}\circ Y\circ\phi$.
It induces a sheaf (iso)morphism $d\Phi:\mF\rightarrow\varphi_*^{-1}\mF$, denoted by the same symbol.

\begin{Def}
\renewcommand{\labelenumi}{(\roman{enumi})}
Let $\Phi:M\rightarrow M$ be an automorphism of a supermanifold $M$. Let $g$ be a semi-Riemannian supermetric
and $\mF_G$ be a $G$-structure on $M$.
\begin{enumerate}
\item $\Phi$ is called an automorphism of $g$ if it is an isometry $\Phi^*g=g$.
\item $\Phi$ is called an automorphism of $\mF_G$ if $d\Phi\mF_G\subseteq\varphi_*^{-1}\mF_G$.
\end{enumerate}
\end{Def}

\begin{Lem}
\label{lemSupermetricAutomorphism}
An automorphism $\Phi:M\rightarrow M$ of $M$ is an automorphism of $g$ if and only if it is an automorphism of $\mF_g$.
\end{Lem}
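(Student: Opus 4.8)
The plan is to reduce both conditions to statements about a single local $OSp_{(t,s)|2m}$-frame, using as the essential bridge the pullback identity for diffeomorphisms recorded above. Since $\Phi=(\varphi,\phi)$ is an automorphism and $d\Phi(X)=\phi^{-1}\circ X\circ\phi$, that identity specialises to
\begin{align*}
\scal[\Phi^*g]{X}{Y}=\phi\circ\scal[g]{d\Phi X}{d\Phi Y}
\end{align*}
for all $X,Y\in\mS M$. I would then exploit two elementary facts: a frame $(X_1,\ldots,X_{n+m})\in\mF_g(U)$ is characterised precisely by the \emph{constant} equations $\scal[g]{X_i}{X_j}=\scal[g_0]{e_i}{e_j}$, and the algebra isomorphism $\phi$ is unital and $\bR$-linear, hence fixes these constant superfunctions $\scal[g_0]{e_i}{e_j}\in\bR$.

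For the forward implication I would assume $\Phi^*g=g$ and evaluate on an arbitrary frame $(X_i)\in\mF_g(U)$. The displayed identity then gives
\begin{align*}
\phi\circ\scal[g]{d\Phi X_i}{d\Phi X_j}=\scal[\Phi^*g]{X_i}{X_j}=\scal[g]{X_i}{X_j}=\scal[g_0]{e_i}{e_j}
\end{align*}
so that, applying $\phi^{-1}$ and using that it fixes the right-hand constants, $\scal[g]{d\Phi X_i}{d\Phi X_j}=\scal[g_0]{e_i}{e_j}$. Hence $(d\Phi X_i)\in\varphi_*^{-1}\mF_g$, and as the frame was arbitrary, $d\Phi\mF_g\subseteq\varphi_*^{-1}\mF_g$.

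For the converse I would assume $d\Phi\mF_g\subseteq\varphi_*^{-1}\mF_g$. By the Gram-Schmidt normal form underlying the $OSp_{(t,s)|2m}$-structure, every point has a neighbourhood $U$ carrying a frame $(X_i)\in\mF_g(U)$, and the hypothesis forces $\scal[g]{d\Phi X_i}{d\Phi X_j}=\scal[g_0]{e_i}{e_j}$. The same identity together with unitality of $\phi$ then yields
\begin{align*}
\scal[\Phi^*g]{X_i}{X_j}=\phi\circ\scal[g]{d\Phi X_i}{d\Phi X_j}=\scal[g_0]{e_i}{e_j}=\scal[g]{X_i}{X_j}
\end{align*}
Since $\Phi^*g$ and $g$ are both $\mO_M$-superbilinear and $(X_i)$ is an $\mO_M(U)$-basis of $\mS M(U)$, I conclude that they agree on $U$, and since such $U$ cover $M$, that $\Phi^*g=g$.

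The only genuinely delicate point I anticipate is the bookkeeping of the inverse-image sheaf $\varphi_*^{-1}$ and the domains on which the transported frame $(d\Phi X_i)$ lives; once one keeps track that $\phi$ is an isomorphism fixing scalars, the whole equivalence rests on the fact that a superbilinear form is determined by its values on a frame.
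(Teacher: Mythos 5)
Your proposal is correct and follows essentially the same route as the paper's own (much terser) proof: both rest on the observation that, via the diffeomorphism pullback formula $\scal[\Phi^*g]{X}{Y}=\phi\circ\scal[g]{d\Phi X}{d\Phi Y}$ and the fact that $\phi$ fixes the constants $\scal[g_0]{e_i}{e_j}$, the equality $\scal[(\Phi^*g)]{X_i}{X_j}=\scal[g]{X_i}{X_j}$ on a frame $(X_i)\in\mF_g(U)$ is equivalent to $(d\Phi[X_i])\in\varphi_*^{-1}\mF_g$, with the converse direction completed by superbilinearity and the local existence of $OSp_{(t,s)|2m}$-frames. You have merely made explicit the steps the paper compresses into ``by definition.''
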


\begin{proof}
Let $(X_1,\ldots)\in\mF_g(U)$. By definition, $\scal[g]{X_i}{X_j}=\scal[(\Phi^*g)]{X_i}{X_j}$ is equivalent
to $(d\Phi[X_1],\ldots)\in\varphi_*^{-1}\mF_g$, thus proving one direction. The converse follows from the
same characterisation together with superbilinearity.
\end{proof}

\section{Killing Vector Fields}
\label{secKillingVF}

In this section, we shall give five equivalent characterisations of Killing vector fields on supermanifolds.
While the building blocks of the theory are mostly available in the literature,
a homogeneous treatment such as below is still missing, and there is sometimes some confusion about the concept.
In particular, we will see how the notion of a Killing vector field as treated in \cite{ACDS97} is different
from the canonical one. We start with the Lie derivative of tensors.

\subsection{The Lie Derivative}

The Lie derivative of a tensor on a manifold $M$ with respect to a vector field $X\in\Gamma(TM)$
is defined by means of the flow $\varphi$ of $X$ which has the defining properties
$\dd{t}\varphi_x(t)=X\circ\varphi_x(t)$ and $\varphi_x(0)=x$.
By a standard theorem \cite{War83}, every vector field $X$ possesses a unique smooth flow $\varphi:D(X)\rightarrow M$,
which is defined on an open neighbourhood $D(X)\subseteq\bR\times M$ of $\{0\}\times M$.
$X$ is called complete if $D(X)=\bR\times M$.

Now let $X$ be a vector field on a supermanifold and $\tilde{X}$ be its canonical projection. We let
$\mD(X)$ be the open subsupermanifold of $\bR^{1|1}\times (M,\mO_M)$ whose underlying smooth manifold is $D(\tilde{X})$.
If $\tilde{X}$ is complete, then $\mD(X)=\bR^{1|1}\times (M,\mO_M)$.

We recall results concerning the flow of super vector fields from \cite{MSV93}.
To integrate also odd vector fields, the derivation $\dd{t}$ needs to be endowed by an odd part
to a (super-)derivation $D$.
As argued in \cite{MSV93}, one is naturally let to one of three \emph{integration models}
$D^{(1)}=\partial_t+\partial_{\tau}$, $D^{(2)}=\partial_t+\partial_{\tau}+\tau\partial_t$
and $D^{(3)}=\partial_t+\tau\partial_{\tau}+\partial_{\tau}$, corresponding to the three different
super Lie algebra structures on $\bR^{1|1}$. Due to the $\ev$-morphism, all of the following
does, however, not depend on the choice of the integration model and, for brevity,
we shall denote either derivation by $D\in\mS(\bR^{1|1})$.
To simplify notation, we shall also denote its lift to $\mD(X)$ by the same symbol $D$.

\begin{Def}
\label{defFlow}
Let $X$ be a super vector field. Its flow is a morphism $\Phi=(\varphi,\phi):\mD(X)\rightarrow M$
such that the following equations hold
\begin{align*}
\ev|_{t=t_0}\circ D\circ\phi=\ev|_{t=t_0}\circ\phi\circ X\;,\qquad
\ev|_{t=0}\circ\phi=\id
\end{align*}
which are referred to as the flow condition and initial condition, respectively.
\end{Def}

In calculations, only the homogeneous part of $D$ with the same parity as $X$ occurs since
the flow equation splits into two equations (according to the $\bZ_2$-decomposition of $X$ and $D$),
of which one vanishes if $X$ is homogeneous. By a slight abuse of notation, we may thus write
$\abs{D}=\abs{X}$.

\begin{Thm}[\cite{MSV93}]
Every super vector field possesses a unique flow.
\end{Thm}

For the strong flow condition $D\circ\phi=\phi\circ X$ without the $\ev$ morphism to hold,
$X$ must satisfy certain conditions as shown in \cite{MSV93}.

\begin{Def}
Let $f\in\mO_M$ be a superfunction. Its Lie derivative along $X$ is
\begin{align*}
L_Xf:=\ev|_{t=0}D\circ\Phi^*f
\end{align*}
where $\Phi:\mD(X)\rightarrow M$ is the flow of $X$.
\end{Def}

\begin{Def}
Let $B\in\Hom_{\mO_M}(\mS M\otimes\ldots\otimes\mS M,\mO_M)$ be a multilinear form.
We define its Lie derivative along $X$ as
\begin{align*}
L_XB:=\ev|_{t=0}D\circ\Phi^*B\in\Hom_{\mO_M}(\mS M\otimes\ldots\otimes\mS M,\mO_M)
\end{align*}
where, on the right hand side, the pullback $\Phi^*B$ is implicitly understood to be restricted to vector fields
in $\mS M\subseteq\mS(\mD(X))$.
\end{Def}

\begin{Lem}
\label{lemLieDerivativeFunctions}
Let $f\in\mO_M$ be a superfunction and $Y\in\mS M$ be a super vector field. Then,
for $X$, $Y$ and $f$ of homogeneous parity,
\begin{align*}
L_Xf=X(f)\;,\qquad
L_Xdf[Y]=(-1)^{\abs{Y}(\abs{f}+\abs{X})}Y\circ X(f)
\end{align*}
\end{Lem}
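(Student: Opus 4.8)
The plan is to unwind both Lie derivatives directly from their definitions, reducing everything to the flow condition of Definition \ref{defFlow} together with the pullback identity of Lemma \ref{lemPullbackFunctions}.

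For the first identity, I would expand $L_Xf=\ev|_{t=0}D\circ\Phi^*f=\ev|_{t=0}D\circ\phi(f)$ using $\Phi^*f=\phi(f)$. Applying the flow condition $\ev|_{t=0}\circ D\circ\phi=\ev|_{t=0}\circ\phi\circ X$ to $f$ and then the initial condition $\ev|_{t=0}\circ\phi=\id$ yields $L_Xf=\ev|_{t=0}\phi(X(f))=X(f)$. This step is essentially immediate.

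For the second identity, I would first rewrite the pulled-back one-form by means of Lemma \ref{lemPullbackFunctions}, so that
\[
\Phi^*df[Y]=(-1)^{\abs{f}\abs{Y}}d\Phi[Y](f)=(-1)^{\abs{f}\abs{Y}}(Y\circ\phi)(f),
\]
where I use the definition $d\Phi[Y]=Y\circ\phi$ from (\ref{eqnDifferential}) and regard $Y\in\mS M\subseteq\mS(\mD(X))$ as its lift to the product. Then $L_Xdf[Y]=\ev|_{t=0}D\bigl((-1)^{\abs{f}\abs{Y}}Y(\phi(f))\bigr)$, and the task becomes moving $D$ and $\ev|_{t=0}$ past $Y$ so that the flow condition can be applied to $\phi(f)$ exactly as in the first part.

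The crux is the commutation argument. Because $D\in\mS(\bR^{1|1})$ and $Y$ are lifts from the two different factors of $\bR^{1|1}\times(M,\mO_M)$, their supercommutator vanishes, giving $D\circ Y=(-1)^{\abs{D}\abs{Y}}Y\circ D$ with $\abs{D}=\abs{X}$; likewise $\ev|_{t=0}$ acts only in the $\bR^{1|1}$ directions and hence commutes with $Y$. Carrying this out produces
\[
L_Xdf[Y]=(-1)^{\abs{f}\abs{Y}}(-1)^{\abs{X}\abs{Y}}Y\bigl(\ev|_{t=0}D\circ\phi(f)\bigr),
\]
and a final application of the flow and initial conditions replaces $\ev|_{t=0}D\circ\phi(f)$ by $X(f)$, yielding the asserted sign $(-1)^{\abs{Y}(\abs{f}+\abs{X})}$. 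I expect the main obstacle to be precisely the bookkeeping of this commutation together with the parity conventions — in particular justifying rigorously that the lifted $Y$ supercommutes with $D$ and commutes with the evaluation morphism, and tracking the two sign factors $(-1)^{\abs{f}\abs{Y}}$ and $(-1)^{\abs{X}\abs{Y}}$ so that they combine into the stated exponent.
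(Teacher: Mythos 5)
Your proposal is correct and follows essentially the same route as the paper's proof: both parts reduce to the flow and initial conditions of Definition \ref{defFlow}, with the second identity handled via Lemma \ref{lemPullbackFunctions}, the identification $d\Phi[Y]=Y\circ\phi$, and the supercommutation of the lifted $Y$ with $D$ and $\ev|_{t=0}$ (the paper's ``$t,\tau$-independence of $Y$''), yielding the sign $(-1)^{\abs{f}\abs{Y}+\abs{X}\abs{Y}}$. The sign bookkeeping you flag as the main obstacle is exactly what the paper's displayed computation records, so nothing further is missing.
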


\begin{proof}
Using the defining properties of the flow, we calculate
\begin{align*}
L_Xf=\ev|_{t=0}D\circ\phi(f)=\ev|_{t=0}\phi\circ X(f)=X(f)
\end{align*}
The second assertion holds by the following calculation.
\begin{align*}
L_Xdf[Y]&=\ev|_{t=0}D\circ\Phi^*df[Y]\\
&=(-1)^{\abs{f}\abs{Y}}\ev|_{t=0}D\circ d\Phi[Y](f)\\
&=(-1)^{\abs{f}\abs{Y}}\ev|_{t=0}D\circ Y\circ\phi(f)\\
&=(-1)^{\abs{f}\abs{Y}}(-1)^{\abs{Y}\abs{D}}Y\circ\ev|_{t=0}D\circ\phi(f)\\
&=(-1)^{\abs{f}\abs{Y}}(-1)^{\abs{Y}\abs{X}}Y\circ\ev|_{t=0}\phi\circ X(f)\\
&=(-1)^{\abs{Y}(\abs{f}+\abs{X})}Y\circ X(f)
\end{align*}
Here, we used Lem. \ref{lemPullbackFunctions} and $t,\tau$-independence of $Y$.
\end{proof}

\begin{Lem}
\label{lemInitialCondition}
Let $\Phi:\mD(X)\rightarrow M$ be the flow of a vector field $X$. Then
the initial condition implies the generalisation
\begin{align*}
\ev|_{t=0}\Phi^*B=B
\end{align*}
for multilinear forms $B$.
\end{Lem}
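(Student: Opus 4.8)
The plan is to verify the identity first for functions, where it is immediate, and then propagate it through the pullback of a multilinear form by a computation in local coordinates. For $f \in \mO_M$ one has $\ev|_{t=0}\Phi^*f = (\ev|_{t=0}\circ\phi)(f) = f$ directly from the initial condition $\ev|_{t=0}\circ\phi = \id$ of Definition \ref{defFlow}. Conceptually, the whole statement merely expresses that restricting the flow to $t=0$ gives the identity morphism, so that $\ev|_{t=0}\Phi^*B$ ought to be $\id^*B = B$; however, since the paper sets up the pullback through its coordinate description rather than a functoriality lemma, I would argue directly in local coordinates $\{\eta^i\}$ on the target $M$ instead of invoking composition of pullbacks.

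First I would treat a single one-form $F \in \Hom_{\mO_M}(\mS M, \mO_M)$. Expanding $d\Phi[Y] = (\phi\circ\partial_{\eta^i})\cdot d\Phi[Y]^i$ and using $F_\Phi(\phi\circ\partial_{\eta^i}) = \phi\circ F(\partial_{\eta^i})$, the pullback $\Phi^*F[Y] = F_\Phi(d\Phi[Y])$ reduces, up to the signs prescribed by superlinearity, to $(\phi\circ F(\partial_{\eta^i}))\cdot d\Phi[Y]^i$. Applying $\ev|_{t=0}$, the initial condition gives $\ev|_{t=0}(\phi\circ F(\partial_{\eta^i})) = F(\partial_{\eta^i})$, while the $t,\tau$-independence of $Y \in \mS M$ lets $\ev|_{t=0}$ pass through $Y$, so that $\ev|_{t=0}(d\Phi[Y]^i) = Y(\ev|_{t=0}\phi(\eta^i)) = Y(\eta^i)$, exactly as in the proof of Lemma \ref{lemPullbackFunctions}. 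The surviving expression is precisely the local coordinate form of $F[Y]$, whence $\ev|_{t=0}\Phi^*F = F$. The general multilinear case then follows either by running the same computation on all slots at once, or more economically by writing $B$ locally as in (\ref{eqnLocalBilinear}) through differentials $d\eta^i$ and coefficient functions $B_{ij}$ and invoking the multiplicativity $\Phi^*(F\otimes G) = \Phi^*F\otimes\Phi^*G$ of (\ref{eqnPullbackMultiplicationTensor}) together with the function case.

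I expect the only real obstacle to be the bookkeeping of Koszul signs: both the expansion of $B_\Phi$ against the coordinate decompositions of the $d\Phi[Y_j]$ and the passage of the even morphism $\ev|_{t=0}$ through $Y$ generate parity-dependent factors, and one must check that these reassemble into exactly the sign convention (\ref{eqnLocalBilinear}) governing $B$'s coordinate expression. Since $\ev|_{t=0}$ is an even algebra morphism and each component $Y(\eta^i)$ carries the parity of $\eta^i$, these factors will match those already present in $B$, so no genuine difficulty should arise beyond careful tracking.
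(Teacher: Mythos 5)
Your proposal is correct and takes essentially the same approach as the paper: the paper likewise proves the one-form case by expanding $d\Phi[Y]$ in local coordinates, splitting the product under $\ev|_{t=0}$, and combining the initial condition with the fact that $Y$ has no $t,\tau$-components (so $\ev|_{t=0}$ commutes with the $M$-direction derivatives), and then declares the general multilinear case analogous. Your explicit reduction of that general case via the local form (\ref{eqnLocalBilinear}) and the multiplicativity (\ref{eqnPullbackMultiplicationTensor}) matches how the paper itself performs such extensions elsewhere, e.g.\ in Lemma \ref{lemBilinearFormFlowEquation}.
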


\begin{proof}
We prove the statement for a one-form $F$, the general case is analogous. Let $Y\in\mS M$ be a vector field.
We let $\eta^1,\ldots,\eta^{n+m}$ denote local coordinates on $M$ and $\xi^1,\ldots,\xi^{n+m+2}$ the corresponding
coordinates on $\mD(X)$ such that $\xi^i=\eta^i$ for $i\in\{1,\ldots,n+m\}$ and $\xi^{n+m+1}=t$ and $\xi^{n+m+2}=\tau$.
Then
\begin{align*}
\ev|_{t=0}\Phi^*F[Y]&=\ev|_{t=0}F_{\Phi}(d\Phi[Y])\\
&=\ev|_{t=0}F_{\Phi}\left((\phi\circ\partial_{\eta^i})\cdot d\Phi[Y]^i\right)\\
&=(-1)^{(\abs{\xi^j}+\abs{\eta^i})\abs{\eta^i}}\ev|_{t=0}\left(\phi\circ F(\partial_{\eta^i})\cdot\dd{\xi^j}\phi(\eta^i)\cdot Y^j\right)\\
&=(-1)^{(\abs{\xi^j}+\abs{\eta^i})\abs{\eta^i}}\ev|_{t=0}\left(\phi\circ F(\partial_{\eta^i})\right)\cdot
\ev|_{t=0}\left(\dd{\xi^j}\phi(\eta^i)\cdot Y^j\right)\\
&=(-1)^{(\abs{\xi^j}+\abs{\eta^i})\abs{\eta^i}}F(\partial_{\eta^i})\cdot\ev|_{t=0}\left(\dd{\xi^j}\phi(\eta^i)\cdot Y^j\right)
\end{align*}
By assumption, $Y^t=Y^{\tau}=0$ such that we may replace $\dd{\xi^j}$ by $\dd{\eta^j}$, which commutes with $\ev$, such that
\begin{align*}
\ev|_{t=0}\Phi^*F[Y]&=(-1)^{(\abs{\eta^j}+\abs{\eta^i})\abs{\eta^i}}F(\partial_{\eta^i})\cdot\dd{\eta^j}\ev|_{t=0}\phi(\eta^i)\cdot Y^j\\
&=(-1)^{(\abs{\eta^j}+\abs{\eta^i})\abs{\eta^i}}F(\partial_{\eta^i})\cdot\dd{\eta^j}(\eta^i)\cdot Y^j\\
&=F(Y)
\end{align*}
\end{proof}

\begin{Lem}
\label{lemLieDerivativeConstruction}
Let $f\in\mO_M$ be a superfunction and $F,G\in\Hom_{\mO_M}(\mS M,\mO_M)$ be one forms.
Then
\begin{align*}
L_X(f\cdot F)&=L_Xf\cdot F+(-1)^{\abs{f}\abs{X}}f\cdot L_XF\\
L_X(F\otimes G)&=L_XF\otimes G+(-1)^{\abs{X}\abs{F}}F\otimes L_XG
\end{align*}
\end{Lem}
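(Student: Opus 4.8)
The plan is to unravel the definition $L_X B = \ev|_{t=0}\,D\circ\Phi^*B$ and exploit three structural facts: the multiplicativity of the pullback (\ref{eqnPullbackMultiplicationTensor}), the graded Leibniz rule for the superderivation $D$ (recalling the convention $\abs{D}=\abs{X}$), and the fact that $\ev|_{t=0}$ is an algebra morphism which, composed with $\Phi^*$, acts as the identity on functions and forms by the initial condition together with Lemma \ref{lemInitialCondition}. Since a multilinear form is determined by its values on $t,\tau$-independent vector fields $Y\in\mS M$, it suffices to verify both identities after evaluation on such arguments.

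For the first identity, I would start from $(L_X(f\cdot F))[Y]=\ev|_{t=0}\,D\big((\Phi^*(f\cdot F))[Y]\big)$ and insert $\Phi^*(f\cdot F)=(\Phi^*f)\cdot(\Phi^*F)$ from (\ref{eqnPullbackMultiplicationTensor}), so that the bracketed function becomes $(\Phi^*f)\cdot(\Phi^*F)[Y]$. Applying the graded Leibniz rule for $D$, with $\abs{\Phi^*f}=\abs{f}$ and $\abs{D}=\abs{X}$, splits this into two summands. Now $\ev|_{t=0}$ distributes over each product; using $\ev|_{t=0}D(\Phi^*f)=L_Xf$ on one factor and $\ev|_{t=0}\Phi^*f=f$, $\ev|_{t=0}(\Phi^*F)[Y]=F[Y]$ (from the initial condition and Lemma \ref{lemInitialCondition}) on the others, one reads off $L_Xf\cdot F[Y]+(-1)^{\abs{f}\abs{X}}f\cdot(L_XF)[Y]$, as claimed.

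The second identity proceeds identically, starting from $\Phi^*(F\otimes G)=\Phi^*F\otimes\Phi^*G$ and evaluating on a pair $Y,Z\in\mS M$. Here one must expand the tensor product via its defining sign, $(\Phi^*F\otimes\Phi^*G)[Y,Z]=(-1)^{\abs{G}\abs{Y}}(\Phi^*F)[Y]\cdot(\Phi^*G)[Z]$, before differentiating with $D$; the graded Leibniz rule then produces the extra sign $(-1)^{\abs{X}(\abs{F}+\abs{Y})}$ coming from commuting $D$ past the function $(\Phi^*F)[Y]$ of parity $\abs{F}+\abs{Y}$. After applying $\ev|_{t=0}$ as above, I expect the main (indeed only real) obstacle to be the bookkeeping that reconciles the accumulated factors $(-1)^{\abs{G}\abs{Y}}$ and $(-1)^{\abs{X}(\abs{F}+\abs{Y})}$ with the tensor-product signs hidden inside $(L_XF\otimes G)[Y,Z]$ and $(-1)^{\abs{X}\abs{F}}(F\otimes L_XG)[Y,Z]$. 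Carrying out this comparison — using $\abs{L_XG}=\abs{G}+\abs{X}$ for the factor sitting in the second slot — shows the two sign combinations coincide, which completes the proof.
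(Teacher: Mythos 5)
Your proposal is correct and follows essentially the same route as the paper's (sketched) proof: both rest on the multiplicativity of the pullback (\ref{eqnPullbackMultiplicationTensor}), the graded derivation property of $D$ with $\abs{D}=\abs{X}$, the multiplicativity of $\ev|_{t=0}$, and Lemma \ref{lemInitialCondition} for the initial condition. Your sign bookkeeping in the tensor-product case (reconciling $(-1)^{\abs{G}\abs{Y}}$ and $(-1)^{\abs{X}(\abs{F}+\abs{Y})}$ via $\abs{L_XG}=\abs{G}+\abs{X}$) is exactly the computation the paper leaves implicit, and it checks out.
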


\begin{proof}
The first assertion follows from a straightforward calculation, using (\ref{eqnPullbackMultiplicationTensor}),
the derivation property of $D$ as well as $\ev|_{t=0}(f\cdot g)=(\ev|_{t=0} f)\cdot(\ev_{t=0} g)$
and Lem. \ref{lemLieDerivativeFunctions} and Lem. \ref{lemInitialCondition}.
The second assertion is shown analogous.
\end{proof}

Any multilinear form can be (locally) written as the tensor product of
one-forms of the form $df$ for a superfunction, multiplied with a superfunction.
By Lem. \ref{lemLieDerivativeFunctions} the Lie derivative of these building blocks
is independent of the integration model chosen (i.e. independent of $D$).
By Lem. \ref{lemLieDerivativeConstruction}, the Lie derivative of a general
multilinear form is uniquely determined by the building blocks and thus
also independent of the integration model. In particular, for an even bilinear form, one starts with
\begin{align*}
L_XB(Y,Z)=(-1)^{\abs{\xi^i}+\abs{\xi^j}+\abs{\xi^i}\abs{\xi^j}}L_X\left(B_{ij}\cdot d\xi^i\otimes d\xi^j\right)(Y,Z)
\end{align*}
as in (\ref{eqnLocalBilinear}). An explicit calculation yields the following result.

\begin{Lem}
\label{lemLieDerivativeBilinearForm}
Let $B\in\Hom_{\mO_M}(\mS M\otimes\mS M,\mO_M)$ be an even bilinear form. Then
\begin{align*}
L_X\scal[B]{Y}{Z}=X\scal[B]{Y}{Z}-\scal[B]{\scal[[]{X}{Y}}{Z}-(-1)^{\abs{X}\abs{Y}}\scal[B]{Y}{\scal[[]{X}{Z}}
\end{align*}
\end{Lem}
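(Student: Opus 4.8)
The plan is to reduce the assertion to the building blocks already computed, namely the coefficient functions $B_{ij}$ and the coordinate differentials $d\xi^i$, and then to reassemble the graded brackets on the right-hand side by hand. Concretely, I would begin from the local expression (\ref{eqnLocalBilinear}) and apply the two Leibniz rules of Lemma \ref{lemLieDerivativeConstruction} to split $L_XB$ into three contributions: one in which $L_X$ acts on the coefficient $B_{ij}$, and two in which it acts on $d\xi^i$ and on $d\xi^j$, each decorated with the appropriate Koszul sign. Lemma \ref{lemLieDerivativeFunctions} then evaluates these pieces explicitly, giving $L_XB_{ij}=X(B_{ij})$ and $L_Xd\xi^i[Y]=(-1)^{\abs{Y}(\abs{\xi^i}+\abs{X})}Y(X(\xi^i))$, so that no flow or pullback survives in the expression.

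Next I would evaluate the three contributions on a pair $(Y,Z)$, using the tensor-product convention $(F\otimes G)(Y,Z)=(-1)^{\abs{G}\abs{Y}}F(Y)G(Z)$ together with the differential convention $d\xi^i[Y]=(-1)^{\abs{\xi^i}\abs{Y}}Y(\xi^i)$. This turns each contribution into a sum of monomials built from $B_{ij}$, the components $Y(\xi^i)$, $Z(\xi^j)$, and the second-order terms $Y(X(\xi^i))$, $Z(X(\xi^j))$, each carrying an explicit sign. In particular, the coefficient contribution produces a term proportional to $X(B_{ij})\cdot Y(\xi^i)\cdot Z(\xi^j)$, whereas the two differential contributions produce terms containing $Y(X(\xi^i))$ and $Z(X(\xi^j))$, respectively.

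In parallel I would expand the target term $X\scal[B]{Y}{Z}$ directly, writing $\scal[B]{Y}{Z}$ in the same local monomials and letting $X$ act as a graded derivation. This produces precisely three families of terms: $X$ acting on $B_{ij}$, $X$ acting on $Y(\xi^i)$ to give $X(Y(\xi^i))$, and $X$ acting on $Z(\xi^j)$ to give $X(Z(\xi^j))$. The first family matches the coefficient contribution of $L_XB$ and drops out of the difference. The remaining families combine with the $Y(X(\xi^i))$ and $Z(X(\xi^j))$ terms coming from the differentials to form the graded commutators $[X,Y](\xi^i)=X(Y(\xi^i))-(-1)^{\abs{X}\abs{Y}}Y(X(\xi^i))$ and its analogue in $Z$. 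Re-identifying these through the differential convention reconstitutes exactly $-\scal[B]{\scal[[]{X}{Y}}{Z}$ and $-(-1)^{\abs{X}\abs{Y}}\scal[B]{Y}{\scal[[]{X}{Z}}$, as claimed.

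I expect the sign bookkeeping to be the only genuine obstacle. Every ingredient---the two tensor conventions, the factor $(-1)^{\abs{f}\abs{X}}$ in the differential, the Koszul signs in the Leibniz rules, and the sign hidden inside the graded bracket---contributes factors that must cancel in a precise way; in particular, the factor $(-1)^{\abs{X}\abs{Y}}$ in front of the $Z$-bracket term emerges only after $X$ is commuted past $Y$ in the monomial $Y(\xi^i)\,Z(\xi^j)$. Since, as noted just before the statement, the Lie derivative is independent of the chosen integration model, it suffices to carry out the whole computation for a single derivation $D$, which keeps the parity bookkeeping as light as possible.
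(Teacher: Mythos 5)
Your proposal is correct and takes essentially the same route as the paper: the paper likewise starts from the local expression (\ref{eqnLocalBilinear}), reduces $L_XB$ to the building blocks $B_{ij}$ and $d\xi^i$ via Lem.~\ref{lemLieDerivativeConstruction} and Lem.~\ref{lemLieDerivativeFunctions}, and leaves the rest as an ``explicit calculation''. Your reassembly of the first- and second-order terms into the graded brackets $[X,Y]$ and $[X,Z]$ is precisely that calculation, so nothing is missing.
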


The right hand side of Lem. \ref{lemLieDerivativeBilinearForm} is taken
as the definition of the Lie derivative in \cite{Kli05}
(up to a global sign). For semi-Riemannian supermetrics, we have
the following characterisation, which naturally generalises the analogous
formula in the classical case. The proof is based on $\nabla$
being metric and torsion-free and (up to signs) the same as in the classical
case, thus omitted.

\begin{Lem}
\label{lemLieDerivativeLeviCivita}
Let $g$ be a semi-Riemannian supermetric and $\nabla$ the Levi-Civita superconnection.
Then the Lie derivative of $g$ can be written
\begin{align*}
L_X\scal[g]{Y}{Z}=(-1)^{\abs{X}\abs{Y}}\left(
\scal[g]{\nabla_YX}{Z}+(-1)^{\abs{X}\abs{Y}+\abs{X}\abs{Z}+\abs{Y}\abs{Z}}\scal[g]{\nabla_ZX}{Y}\right)
\end{align*}
\end{Lem}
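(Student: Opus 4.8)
The plan is to start from the bracket expression for the Lie derivative of an even bilinear form provided by Lem. \ref{lemLieDerivativeBilinearForm}, specialised to $g$,
\begin{align*}
L_X\scal[g]{Y}{Z}=X\scal[g]{Y}{Z}-\scal[g]{\scal[[]{X}{Y}}{Z}-(-1)^{\abs{X}\abs{Y}}\scal[g]{Y}{\scal[[]{X}{Z}}
\end{align*}
and to rewrite each of the three summands on the right using that the Levi-Civita connection is metric and torsion-free. For the first summand I would apply metric compatibility (the $N$-analogue of Lem. \ref{lemMetricPullback}), $X\scal[g]{Y}{Z}=\scal[g]{\nabla_XY}{Z}+(-1)^{\abs{X}\abs{Y}}\scal[g]{Y}{\nabla_XZ}$; for the two bracket terms I would insert the graded torsion-free identity $\scal[[]{X}{Y}=\nabla_XY-(-1)^{\abs{X}\abs{Y}}\nabla_YX$ and its counterpart with $Z$ in place of $Y$.

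With these substitutions the four terms carrying $\nabla_XY$ and $\nabla_XZ$ cancel in pairs, leaving
\begin{align*}
L_X\scal[g]{Y}{Z}=(-1)^{\abs{X}\abs{Y}}\scal[g]{\nabla_YX}{Z}+(-1)^{\abs{X}\abs{Y}+\abs{X}\abs{Z}}\scal[g]{Y}{\nabla_ZX}
\end{align*}
The first summand already agrees with the claim, so only the second has to be brought into the stated form.

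For that final step I would use supersymmetry of $g$. Since $\nabla$ is even, $\nabla_ZX$ has parity $\abs{X}+\abs{Z}$, whence $\scal[g]{Y}{\nabla_ZX}=(-1)^{\abs{Y}(\abs{X}+\abs{Z})}\scal[g]{\nabla_ZX}{Y}$; substituting this, collecting the exponent modulo $2$, and pulling out the common factor $(-1)^{\abs{X}\abs{Y}}$ produces precisely the sign $(-1)^{\abs{X}\abs{Y}+\abs{X}\abs{Z}+\abs{Y}\abs{Z}}$ in front of $\scal[g]{\nabla_ZX}{Y}$, as asserted. The computation is the line-by-line super-analogue of the classical argument, so the only genuine obstacle is keeping track of the signs---above all fixing the graded torsion-free convention and handling the supersymmetry exponent correctly.
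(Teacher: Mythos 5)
Your proof is correct and follows exactly the route the paper intends: the paper omits the proof, remarking only that it rests on $\nabla$ being metric and torsion-free and is, up to signs, the same as the classical argument --- which is precisely your computation starting from Lem.~\ref{lemLieDerivativeBilinearForm}, cancelling the $\nabla_XY$ and $\nabla_XZ$ terms, and converting the last term by supersymmetry of $g$. Your sign bookkeeping (using that $\nabla_ZX$ has parity $\abs{X}+\abs{Z}$, so the supersymmetry exponent $\abs{Y}(\abs{X}+\abs{Z})$ combines with $\abs{X}\abs{Y}+\abs{X}\abs{Z}$ to give the stated prefactor) checks out.
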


\subsection{Killing Vector Fields}

\begin{Def}
Let $((M,\mO_M),g)$ be a semi-Riemannian supermanifold.
A \emph{Killing vector field} is a vector field $X$ such that $L_Xg=0$
\end{Def}

For the proof of the following characterisation theorem, we shall need the
flow equation for superbilinear forms as stated in the next lemma,
which is easily generalised to general multilinear forms.
A similar result is stated in \cite{MSV93} as Prp. 4.3.

\begin{Lem}
\label{lemBilinearFormFlowEquation}
Let $B$ be a superbilinear form. Then the flow equation holds:
\begin{align*}
\ev|_{t=t_0}D\circ\Phi^*B=\ev|_{t=t_0}\Phi^*L_XB
\end{align*}
Moreover, if $X$ possesses a strong flow s.th. $D\circ\phi=\phi\circ X$,
then the flow equation is satisfied without the $\ev$ morphism.
\end{Lem}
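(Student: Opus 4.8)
The plan is to reduce the statement to two kinds of \emph{building blocks}, superfunctions $f$ and their differentials $df$, and then to propagate the identity to an arbitrary superbilinear form by the graded Leibniz rules already at our disposal. Recall from (\ref{eqnLocalBilinear}) that, locally, every even bilinear form is a finite $\mO_M$-combination of terms $B_{ij}\cdot d\xi^i\otimes d\xi^j$, and more generally any multilinear form is a sum of monomials $f\cdot df_1\otimes\cdots\otimes df_k$. Since both sides of the asserted equation are additive in $B$, it suffices to treat such monomials. Throughout, the pullback $\Phi^*L_XB$ on the right is understood to be restricted to $\mS M\subseteq\mS(\mD(X))$, in accordance with the convention in the definition of $L_XB$.

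For the base cases I would argue as follows. For a superfunction $f$ one has $\Phi^*f=\phi(f)$ and, by Lem. \ref{lemLieDerivativeFunctions}, $L_Xf=X(f)$; hence $\ev|_{t=t_0}D\circ\Phi^*f=\ev|_{t=t_0}D\circ\phi(f)$ and $\ev|_{t=t_0}\Phi^*L_Xf=\ev|_{t=t_0}\phi(X(f))$, and these agree \emph{precisely} by the flow condition $\ev|_{t=t_0}\circ D\circ\phi=\ev|_{t=t_0}\circ\phi\circ X$ of Def. \ref{defFlow}. For a differential $df$, the key observation is that $L_X$ commutes with $d$, i.e. $L_Xdf=d(X(f))$, which follows by comparing Lem. \ref{lemLieDerivativeFunctions} with the definition $dh[Y]=(-1)^{\abs{h}\abs{Y}}Y(h)$. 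This lets me evaluate both sides on a $t,\tau$-independent vector field $Y\in\mS M$ using Lem. \ref{lemPullbackFunctions}: the right side becomes $\ev|_{t=t_0}\Phi^*d(X(f))[Y]=(-1)^{(\abs{f}+\abs{X})\abs{Y}}\ev|_{t=t_0}d\Phi[Y](X(f))$, while on the left I would rewrite $\Phi^*df[Y]=(-1)^{\abs{f}\abs{Y}}d\Phi[Y](f)=(-1)^{\abs{f}\abs{Y}}Y(\phi(f))$, graded-commute $D$ past $Y$ (they satisfy $DY=(-1)^{\abs{D}\abs{Y}}YD$ with $\abs{D}=\abs{X}$) and past $\ev|_{t=t_0}$, and finally invoke the flow condition to replace $D\circ\phi$ by $\phi\circ X$. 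Collecting the signs, the two expressions coincide.

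The inductive step is where I would check that the decomposition really closes up. Writing $A_1(B):=\ev|_{t=t_0}D\circ\Phi^*B$ and $A_2(B):=\ev|_{t=t_0}\Phi^*L_XB$, the multiplicativity of the pullback (\ref{eqnPullbackMultiplicationTensor}), the graded derivation property of $D$, and the multiplicativity of $\ev|_{t=t_0}$ produce for $A_1$ exactly the same graded Leibniz recursion in $f\cdot F$ and in $F\otimes G$ as Lem. \ref{lemLieDerivativeConstruction} (combined with the same three facts) produces for $A_2$, the surviving coefficients being in both cases the common $\ev|_{t=t_0}$-restrictions of the pulled-back factors. Consequently $A_1=A_2$ on products and tensor products as soon as it holds on the factors, and an induction over the length of the monomial finishes the first claim. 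I expect the main obstacle to be the sign bookkeeping: confirming that applying the derivation $D$ to a pulled-back tensor product obeys the graded Leibniz rule in the \emph{tensor factors} with precisely the sign $(-1)^{\abs{X}\abs{F}}$ that appears in Lem. \ref{lemLieDerivativeConstruction}, so that the two recursions match term-for-term and sign-for-sign rather than merely up to an overall factor.

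Finally, for the \emph{moreover} part I would repeat the entire argument verbatim, but with the strong flow condition $D\circ\phi=\phi\circ X$ in place of its $\ev$-truncated form and with every occurrence of $\ev|_{t=t_0}$ deleted. The base case for $f$ then reads $D\circ\Phi^*f=D\circ\phi(f)=\phi(X(f))=\Phi^*L_Xf$ directly; the $df$ case and both Leibniz recursions carry over unchanged, since they only used the graded derivation property of $D$, Lem. \ref{lemPullbackFunctions}, and the flow condition. Hence $D\circ\Phi^*B=\Phi^*L_XB$ holds without the $\ev$ morphism.
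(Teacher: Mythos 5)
Your proposal is correct and follows essentially the same route as the paper's own (sketched) proof: verify the weak flow equation on the building blocks $f$ and $df$ via the flow condition, Lem.~\ref{lemPullbackFunctions} and $t,\tau$-independence of the inserted vector fields, then propagate to a general $B$ through the local decomposition (\ref{eqnLocalBilinear}) using Lem.~\ref{lemLieDerivativeConstruction} together with (\ref{eqnPullbackMultiplicationTensor}), and finally delete every $\ev|_{t=t_0}$ under the strong flow hypothesis. Your sign bookkeeping (matching $\abs{D}=\abs{X}$ in the Leibniz recursion with the signs of Lem.~\ref{lemLieDerivativeConstruction}) and the observation that the common $\ev|_{t=t_0}$-restricted factors match on both sides are exactly the details the paper leaves implicit.
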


\begin{proof}
By a direct calculation along the lines of the proofs of Lem. \ref{lemLieDerivativeFunctions}
and Lem. \ref{lemInitialCondition}, one verifies the (weak) flow equations
\begin{align*}
\ev|_{t=t_0}D\circ\Phi^*f=\ev|_{t=t_0}\Phi^*L_Xf\;,\qquad
\ev|_{t=t_0}D\circ\Phi^*df=\ev|_{t=t_0}\Phi^*L_Xdf
\end{align*}
for a superfunction $f$ as well as $df$. Writing $B$ in local form as in (\ref{eqnLocalBilinear}),
a further calculation using these building blocks as well as Lem. \ref{lemLieDerivativeConstruction}
and (\ref{eqnPullbackMultiplicationTensor}) yields the assertion.
It is clear that all steps can be done without $\ev$ provided that
the strong flow condition holds.
\end{proof}

Let $g$ be a semi-Riemannian supermetric on $M$ with associated
$OSp_{(t,s)|2m}$-structure $\mF_g=\mF_{OSp_{(t,s)|2m}}$ as in Lem. \ref{lemSupermetricOSpStructure}.
Let $U\subseteq M$ be open. We write $E\in\mF_g(U)$ as $E=(X_1,\ldots,X_{t+s+2m})$ such that
$g(X_i,X_j)=g_0(e_i,e_j)$. Any supermatrix with entries in $\mO_M(U)$ acts on $E$
via (\ref{eqnRightAction}). Following \cite{ACDS97}, we call $U$ ''small'' if it is such that
the action of $OSp_{(t,s)|2m}(U)$ on $\mF_g(U)$ is simply transitive.
Moreover, we set $L_XE:=(\scal[[]{X}{X_1},\ldots,\scal[[]{X}{X_{t+s+2m}})$
for a vector field $X\in\mS M$. With this notation, our characterisation theorem can
be stated as follows.

\begin{Thm}
\label{thmKillingVF}
Let $X\in\mS M$ be a vector field. Then the following conditions are equivalent.
\begin{enumerate}
\renewcommand{\labelenumi}{(\roman{enumi})}
\item $X$ is Killing, i.e. $L_Xg=0$.
\item For all $Y,Z\in\mS M$, $\scal[g]{\nabla_YX}{Z}+(-1)^{\abs{X}\abs{Y}+\abs{X}\abs{Z}+\abs{Y}\abs{Z}}\scal[g]{\nabla_ZX}{Y}=0$.
\item The metric $\Phi^*g=g$ is preserved by the flow $\Phi$ of $X$.
\item $d\Phi\mF_G\subseteq\varphi_*^{-1}\mF_G$ for the flow $\Phi$ of $X$.
\item $L_XE\in E\cdot(\mathfrak{osp}_{(t,s)|2m}\otimes\mO_M(U))$ for all ''small'' $U\subseteq M$ and $E\in\mF_g(U)$.
\end{enumerate}
\end{Thm}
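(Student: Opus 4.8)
The plan is to treat (i) as a hub and reduce everything to it, establishing the equivalences (i)$\Leftrightarrow$(ii), (i)$\Leftrightarrow$(v), (i)$\Leftrightarrow$(iii) and (iii)$\Leftrightarrow$(iv). The first of these is immediate from Lem.~\ref{lemLieDerivativeLeviCivita}: there $L_X\scal[g]{Y}{Z}$ equals $(-1)^{\abs{X}\abs{Y}}$ times precisely the expression appearing in (ii), and since the prefactor is invertible, $L_Xg$ vanishes on a pair $(Y,Z)$ exactly when the bracket in (ii) does. As $L_Xg$ is $\mO_M$-bilinear, its vanishing on all pairs is its vanishing as a tensor, whence (i)$\Leftrightarrow$(ii).

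For (i)$\Leftrightarrow$(v) I would fix a ''small'' $U$ and an $OSp_{(t,s)|2m}$-frame $E=(X_1,\ldots)\in\mF_g(U)$, so that $\scal[g]{X_i}{X_j}=\scal[g_0]{e_i}{e_j}$ is constant. Since $E$ is a frame, $L_XE=(\scal[[]{X}{X_1},\ldots)$ admits a unique expansion $L_XE=E\cdot A$ with $A$ an $\mO_M(U)$-valued supermatrix of parity $\abs{X}$, that is, $\scal[[]{X}{X_j}=\sum_i X_i\cdot A_{ij}$. Substituting this into Lem.~\ref{lemLieDerivativeBilinearForm} and using that $X\scal[g]{X_i}{X_j}=0$, the numbers $(L_Xg)(X_i,X_j)$ become a signed combination of $\scal[g_0]{Ae_i}{e_j}$ and $\scal[g_0]{e_i}{Ae_j}$. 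Comparing with the defining relation $\scal[g_0]{Lv}{w}=-(-1)^{\abs{L}\abs{v}}\scal[g_0]{v}{Lw}$ of $\mathfrak{osp}_{(t,s)|2m}$ shows that these vanish for all $i,j$ precisely when $A\in\mathfrak{osp}_{(t,s)|2m}\otimes\mO_M(U)$; by bilinearity of $L_Xg$ and the fact that $E$ is a basis, this is exactly (i)$\Leftrightarrow$(v).

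The equivalence (iii)$\Leftrightarrow$(iv) I would obtain as in Lem.~\ref{lemSupermetricAutomorphism}, now applied to the flow morphism $\Phi$. By the very definition of the pullback one has $\scal[g_{\Phi}]{d\Phi[X_i]}{d\Phi[X_j]}=\scal[\Phi^*g]{X_i}{X_j}$, whereas $\scal[g]{X_i}{X_j}=\scal[g_0]{e_i}{e_j}$ for $E\in\mF_g$. Hence $d\Phi E$ lands in $\varphi_*^{-1}\mF_g$ iff $\scal[\Phi^*g]{X_i}{X_j}=\scal[g]{X_i}{X_j}$ for all $i,j$, which by $\mO_M$-bilinearity and Lem.~\ref{lemSupermetricOSpStructure} is equivalent to $\Phi^*g=g$, that is, to (iii).

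It remains to link the infinitesimal condition (i) to the flow-level condition (iii), for which I would use the flow equation of Lem.~\ref{lemBilinearFormFlowEquation} together with the initial condition of Lem.~\ref{lemInitialCondition}. The direction (iii)$\Rightarrow$(i) is immediate: if $\Phi^*g=g$ then, by definition, $L_Xg=\ev|_{t=0}D\circ\Phi^*g=0$, since under $\Phi^*g=g$ the coefficients of $\Phi^*g$ carry no $t,\tau$-dependence and are annihilated by $D$. For (i)$\Rightarrow$(iii) the flow equation yields $\ev|_{t=t_0}D\circ\Phi^*g=\ev|_{t=t_0}\Phi^*L_Xg=0$ for every $t_0$, while Lem.~\ref{lemInitialCondition} pins down $\Phi^*g$ at $t=0$. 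I expect this last implication to be the main obstacle: the weak flow equation only controls $D\circ\Phi^*g$ after the $\ev$ morphism, so that — unlike in the classical ODE-uniqueness argument — the even ($t$-body) part and the purely odd contribution along $\tau$ must be treated separately, combining the flow equation at all $t_0$ with the $\ev$-independence of the construction to annihilate the odd part and integrate the even one to the constant value $g$. For even $X$ this collapses to the classical reconstruction, whereas the odd case is where the genuinely supergeometric care enters.
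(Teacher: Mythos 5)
Your overall architecture coincides with the paper's own proof: (i)$\iff$(ii) is read off from Lem.~\ref{lemLieDerivativeLeviCivita}; (i)$\iff$(v) is obtained by expanding $L_XE=E\cdot A$ in a frame, killing the term $X\scal[g]{X_i}{X_j}$ because the frame coefficients are constant, and comparing with the defining relation of $\mathfrak{osp}_{(t,s)|2m}$ via Lem.~\ref{lemLieDerivativeBilinearForm}; (iii)$\iff$(iv) is the argument of Lem.~\ref{lemSupermetricAutomorphism}; and (iii)$\implies$(i) follows directly from the definition of the Lie derivative and $t,\tau$-independence of $g$. All of these steps are correct and are exactly what the paper does.

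The genuine gap is the implication (i)$\implies$(iii), which you explicitly leave as an expectation rather than an argument, and the missing idea is a parity argument. Fix $Y,Z\in\mS M$ of pure parity and set $f:=\Phi^*g(Y,Z)\in\mO_{\mD(X)}$, decomposed as $f=f^0+\tau\cdot f^1$ along the odd flow coordinate. Since $g$ is even and $\Phi$, being a morphism, is even, $f$ is homogeneous, so $f^0$ and $f^1$ have \emph{opposite} parities. The weak flow equation of Lem.~\ref{lemBilinearFormFlowEquation} together with $L_Xg=0$ gives, for every $t_0$,
\begin{align*}
0=\ev|_{t=t_0}D(f)=\partial_tf^0(t_0)+f^1(t_0)
\end{align*}
and since $\partial_t$ preserves parity, the two summands are homogeneous of opposite parity; hence each vanishes separately, i.e.\ $f^1=0$ and $\partial_tf^0=0$. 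This is the mechanism behind what you call ``annihilating the odd part and integrating the even one'': it is not the ``$\ev$-independence of the construction'' that accomplishes the splitting, but the elementary fact that a sum of two homogeneous elements of opposite parity vanishes only if both do. Once $f^1=0$ and $\partial_tf^0=0$ are established, Lem.~\ref{lemInitialCondition} gives $f^0=f^0(t=0)=g(Y,Z)$, hence $\Phi^*g(Y,Z)=g(Y,Z)$ for all homogeneous $Y,Z$, and $\Phi^*g=g$ follows by bilinearity. Without the parity observation, the single scalar equation $\partial_tf^0(t_0)+f^1(t_0)=0$ does not allow you to conclude that either term vanishes, and your proposal stops exactly at this point.
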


\begin{proof}
$(i)\iff(ii)$ is immediate by Lem. \ref{lemLieDerivativeLeviCivita}.

$(iii)\implies(i)$: Assume that $\Phi^*g=g$. Then, by $t$,$\tau$-indepence of $g$, we yield
\begin{align*}
L_Xg=\ev|_{t=0}D\circ\Phi^*g=\ev|_{t=0}D\circ g=0
\end{align*}

$(i)\implies(iii)$:
Let $X$ be a Killing vector field, we want to show that $\Phi^*g=g$ follows.
Let $Y,Z$ be vector fields of pure parity on $M$ and consider the superfunction
\begin{align*}
f:=\Phi^*g(Y,Z)\in\mO_{\mD(X)}\;,\qquad f=f^0+\tau\cdot f^1\;,\qquad f^0,f^1\in\mO_M
\end{align*}
Since $g$ is purely even and $\Phi$, being a morphism, is even, $f$ has a fixed parity
and thus $f^0$ and $f^1$ are of opposite parity (because of $\tau$).
By assumption, $L_Xg=0$ vanishes. Therefore, by the flow equation
of Lem. \ref{lemBilinearFormFlowEquation}, we have
\begin{align*}
0&=\ev|_{t=t_0}D(f)=\ev|_{t=t_0}(\partial_t+\partial_{\tau})(f^0+\tau\cdot f^1)
=\ev|_{t=t_0}(\partial_tf^0+\tau\cdot\partial_tf^1+f^1)\\
&=\partial_tf^0(t_0)+f^1(t_0)
\end{align*}
The first and second terms are of opposite parity (as stated above).
Therefore, each summand vanishes individually such that $f^1=0$
and $\partial_tf^0=0$. Consider the local representation $f^0=\sum_{J}f^0_J\theta^J$
with odd coordinates $\theta^i$ on $M$ and multiindices $J$, for which we yield
$\partial_tf^0_J=0$ (an equation for ordinary functions). Now, by Lem. \ref{lemInitialCondition},
we have the initial condition
\begin{align*}
f^0(t=0)=\ev|_{t=0}f=\ev|_{t=0}g(Y,Z)=g(Y,Z)
\end{align*}
where $f^0(t=0)=\sum_Jf^{\theta}_J(t=0)\theta^J$ (the right hand side can be
expanded into an analogous sum). Therefore, we conclude
\begin{align*}
\Phi^*g(Y,Z)=f=f^0=g(Y,Z)
\end{align*}
for all vector fields $Y,Z$ of fixed parity. Therefore, $\Phi^*g=g$
which was to be shown.

$(iii)\iff(iv)$: The proof of Lem. \ref{lemSupermetricAutomorphism} applies verbatim.

$(i)\iff(v)$:
Let $E=(X_1,\ldots)\in\mF_g(U)$ and $X\in\mS M$. There is a supermatrix $L\in\fgl_{t+s|2m}(\mO_M(U))$
such that $\scal[[]{X}{X_i}=X_m\cdot L_{mi}$. By Lem. \ref{lemLieDerivativeBilinearForm}, we have
\begin{align*}
L_Xg(X_i,X_j)&=-g(\scal[[]{X}{X_i},X_j)-(-1)^{\abs{X}\abs{X_i}}g(X_i,\scal[[]{X}{X_j})\\
&=-g(X_m\cdot L_{mi},X_j)-(-1)^{\abs{X}\abs{X_i}}g(X_i,X_m\cdot L_{mj})\\
&=-g_0(L\cdot e_i,e_j)-(-1)^{\abs{g}\abs{e_i}}g_0(e_i,L\cdot e_j)
\end{align*}
where $e_i$ is the standard basis of $\mO_M(U)^{t+s|2m}$ and $g_0$ is the standard supermetric
as in (\ref{eqnStandardMetric}).
It follows immediately that $L_Xg=0$ is equivalent to
$L\in\mathfrak{osp}_{(t,s)|2m}\otimes\mO_M(U)$.
\end{proof}

Finally, we consider Killing vector fields on spinor supermanifolds, the
example of supermanifolds as considered in \cite{ACDS97}. Let $(M,g)$ be a spin manifold
and consider a parallel non-degenerate suitable bulinear form $g_1$ on the spinor bundle $S$.
We also assume that $g_1$ is skew-symmetric (consult \cite{Har90} for a classification of such forms)
such that $g+g_1$ induces a Riemannian supermetric on the split supermanifold $(M,\Gamma(\bigwedge S))$.
There is a canonical monomorphism $\iota:\Gamma(TM)\oplus\Gamma(S^*)\rightarrow\mS M$ of sheaves which induces an isomorphism
$\iota:T_pM\oplus S^*_p\rightarrow S_pM$ for every $p\in M$.
It follows that $\{\iota(X_i),\iota(s^*_j)\}$ is a (local) basis for $\mS M$ if $\{X_i\}$ is a basis of $\Gamma(TM)$
and $\{s^*_j\}$ is a basis of $S^*$.
We identify sections $s^*\in\Gamma(S^*)$ with $s\in\Gamma(S)$ via $g_1$.

\begin{Lem}
\label{lemKillingSpinor}
Let $s^*\in\Gamma(S^*)$. Then the super vector field
$\iota(s^*)$ is Killing if and only if $s$ is a parallel spinor.
\end{Lem}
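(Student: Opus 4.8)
The plan is to deduce the statement from the equivalence $(i)\iff(ii)$ of Theorem~\ref{thmKillingVF}, so that $X:=\iota(s^*)$ is Killing precisely when the graded Killing equation $\scal[g]{\nabla_Y X}{Z}+(-1)^{\abs{X}\abs{Y}+\abs{X}\abs{Z}+\abs{Y}\abs{Z}}\scal[g]{\nabla_Z X}{Y}=0$ holds for all $Y,Z\in\mS M$, where $\nabla$ is the Levi-Civita superconnection of the supermetric induced by $g+g_1$. Since $\iota(s^*)$ is odd and the supermetric is even, the pairing $\scal[g]{\nabla_Y X}{Z}$ can only be nonzero when $\abs{Y}+\abs{Z}=1$; hence it suffices to test the equation on pairs $Y=\iota(W)$ with $W\in\Gamma(TM)$ and $Z=\iota(t^*)$ with $t^*\in\Gamma(S^*)$ of opposite parity, where by symmetry of the equation the two orderings coincide.

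The first substantial step is to identify the Levi-Civita superconnection on $(M,\Gamma(\bigwedge S))$ explicitly. I would show that the connection assembled from the Levi-Civita connection of $g$ on the even directions $\iota(\Gamma(TM))$ and from the spinor connection $\nabla^S$ (and its dual on $S^*$) on the odd directions, via $\nabla_{\iota(W)}\iota(s^*)=\iota(\nabla^S_W s^*)$, is metric and torsion-free, hence equals $\nabla$ by uniqueness. Metricity on the odd block is exactly the hypothesis that $g_1$ is parallel, while torsion-freeness in the mixed directions rests on the bracket identity $\scal[[]{\iota(W)}{\iota(t^*)}=\iota(\nabla^S_W t^*)$ relating an even covariant field to an odd contraction. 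A key auxiliary fact is that two odd contractions anticommute, so $\scal[[]{\iota(t^*)}{\iota(s^*)}=0$; combined with torsion-freeness and the bracket identity this yields $\nabla_{\iota(t^*)}\iota(W)=0$ and shows that $\nabla_{\iota(t^*)}\iota(s^*)$ carries no even component.

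With these facts in hand the Killing equation collapses. Using that $\nabla$ is metric, the ``odd-direction'' summand satisfies $\scal[g]{\nabla_{\iota(t^*)}\iota(s^*)}{\iota(W)}=\scal[g]{\iota(s^*)}{\nabla_{\iota(t^*)}\iota(W)}=0$, so only the first summand survives and the equation reduces to $\scal[g]{\nabla_{\iota(W)}\iota(s^*)}{\iota(t^*)}=0$ for all $W$ and $t^*$. Unwinding the block structure, the left-hand side is the spinor pairing $g_1(\nabla^S_W s,t)$, where we use the identification $s^*\leftrightarrow s$ together with the fact that $g_1$ is parallel, so that $\nabla^S s^*$ corresponds to $\nabla^S s$. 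Since $g_1$ is nondegenerate, this vanishes for all $W$ and all $t$ if and only if $\nabla^S_W s=0$ for every $W$, i.e.\ $s$ is a parallel spinor, which is the assertion.

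The main obstacle I expect is precisely the first step: pinning down the Levi-Civita superconnection on the split supermanifold, and in particular establishing the bracket identity relating the even covariant fields $\iota(W)$ to the odd contractions $\iota(t^*)$. This is where the geometry of $(M,\Gamma(\bigwedge S))$ and the hypothesis that $g_1$ is parallel genuinely enter. Once the block structure, the vanishing of the odd--odd bracket, and the consequent identity $\nabla_{\iota(t^*)}\iota(W)=0$ are in place, the remaining reduction is a routine bookkeeping of parities followed by a single application of the nondegeneracy of $g_1$.
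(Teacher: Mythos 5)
Your overall strategy---invoking Theorem~\ref{thmKillingVF}(ii), reducing by tensoriality to pairs from the image of $\iota$, and concluding from nondegeneracy of $g_1$---can be made to work, and your final identification of the mixed component with $g_1(\nabla^S_W s,t)$ agrees with what the paper obtains. But your pivotal first step is false as stated. The connection assembled block-diagonally from the Levi-Civita connection of $g$ and the spinor connection $\nabla^S$ is metric (this is indeed where parallelism of $g_1$ enters), but it is \emph{not} torsion-free, so ``equals $\nabla$ by uniqueness'' does not follow. You verify torsion-freeness only in the mixed directions; in two \emph{even} directions it fails, because the lifted fields $\iota(W)=\nabla^S_W$ satisfy
\begin{align*}
[\iota(W),\iota(V)]=\iota([W,V])+R^S(W,V)^\wedge
\end{align*}
as operators on $\Gamma(\textstyle\bigwedge S)$, where $R^S(W,V)^\wedge$ is the spinor curvature endomorphism extended as an (even, vertical) derivation; it vanishes only for flat $\nabla^S$. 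Hence the torsion of your candidate connection in two even directions equals $-R^S(W,V)^\wedge$, and the true Levi-Civita superconnection differs from it by a contortion term. Consequently the identities you build on are also false in general: a Koszul-formula computation shows that $\scal[g]{\nabla_{\iota(W)}\iota(s^*)}{\iota(V)}$ and $\scal[g]{\nabla_{\iota(t^*)}\iota(W)}{\iota(V)}$ are both proportional to $\scal[g]{R^S(W,V)^\wedge}{\iota(s^*)}$ resp.\ $\scal[g]{R^S(W,V)^\wedge}{\iota(t^*)}$, so $\nabla_{\iota(W)}\iota(s^*)\neq\iota(\nabla^S_Ws^*)$ and $\nabla_{\iota(t^*)}\iota(W)\neq 0$: each acquires a horizontal curvature component. (A smaller imprecision: your opening parity argument is not a general fact either---an even supermetric pairs an even with an odd field to an \emph{odd function}, not to zero; what you actually need is the block structure of this particular supermetric on the image of $\iota$, which the paper states explicitly.)

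Your conclusion happens to survive these errors: the curvature components are horizontal, hence $g$-orthogonal to the odd directions, so they drop out of the mixed Killing equation, and in the even--even equation they cancel by the antisymmetry $R^S(W,V)=-R^S(V,W)$. But none of this is in your proposal, and it is precisely the content that the curvature term forces one to check once the connection route is chosen. This is also, implicitly, why the paper takes a different route: it uses characterisation (i) together with Lem.~\ref{lemLieDerivativeBilinearForm}, so that $L_{\iota(s^*)}g$ is expressed purely through brackets \emph{with the odd field} $\iota(s^*)$, namely $[\iota(s^*),\iota(t^*)]=0$ and $[\iota(s^*),\iota(W)]=-\iota(\nabla^S_Ws^*)$ (Lem.~1 of \cite{ACDS97}); no even--even bracket, and therefore no curvature term and no knowledge of the Levi-Civita superconnection, is ever needed. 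To repair your argument, either compute the Levi-Civita connection honestly via the super Koszul formula and show the curvature corrections are irrelevant as above, or switch to the bracket-based characterisation.
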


In particular, we obtain existence results for Killing vector fields on
spinor supermanifolds from the classification of spin manifolds admitting
parallel spinors \cite{MS00}. The definition of a Killing vector field in \cite{ACDS97}
is as in Thm. \ref{thmKillingVF}(v) but with $OSp_{(t,s)|2m}$ replaced by some
super Poincar\'e group. For the case of $\iota(s^*)$, this is proved to be equivalent
to $s$ being a twistor spinor, which is different from our characterisation.

\begin{proof}
It suffices to consider $L_{\iota(s^*)}g$ with vector fields from the image of $\iota$ inserted.

Let $t^*,u^*\in\Gamma(S^*)$ such that $\iota(t^*),\iota(u^*)\in\mS M_1$.
Using Lem. \ref{lemLieDerivativeBilinearForm} and Lem. 1 of \cite{ACDS97} we yield
$\scal[L_{\iota(s^*)}]{\iota(t^*)}{\iota(u^*)}=0$ which, therefore, is no condition.

Next, we consider $Y,Z\in\Gamma(TM)$ such that $\iota(Y),\iota(Z)\in\mS M_0$.
The type of supermetric considered here vanishes upon insertion of an even and an odd vector field.
We are thus led to $\scal[L_{\iota(s^*)}]{\iota(Y)}{\iota(Z)}=0$ which is again no condition.

Next, we find $\scal[L_{\iota(s^*)}]{\iota(t^*)}{\iota(Y)}=-\scal[g_1]{t}{\nabla_Ys}$.
A necessary and sufficient condition for the last term to vanish for all $t^*$ and $Y$ is
$\nabla s\equiv 0$.
\end{proof}

\section{Superharmonic Field Theories}
\label{secSuperharmonic}

The classical harmonic action functional for maps $\varphi:M\rightarrow N$ between semi-Riemannian manifolds
$(M,h)$ and $(N,g)$ reads
\begin{align}
\label{eqnHarmonicAction}
\mA(\varphi)=\frac{1}{2}\int_M\dvol_h\,\tr_h(\varphi^*g)
=\frac{1}{2}\int_M\dvol_h h^{ij}(x)\scal[(\varphi^*g)_x]{\dd{x^i}}{\dd{x^j}}
\end{align}
Critical points of this functional are called \emph{harmonic maps}.
For an exhaustive treatment of the Riemannian case, consult \cite{Xin96}.
In this section, we study a natural generalisation of (\ref{eqnHarmonicAction}) for
semi-Riemannian supermanifolds $(M,h)$ and $(N,g)$ and prove three Noether theorems in this context.

Throughout, $M$ is assumed to be compact and superoriented as explained below after introducing some terminology.
For a superfunction $f\in\mO_M$, we write $f>0$ if $\tilde{f}(p)>0$ for all $p$.
Moreover, we set $\abs{f}:=m\cdot f$, where $m:M\rightarrow\{\pm 1\}$ is such that $\abs{f}>0$.
It can be shown that, in case $f$ is even and $f>0$, there is a unique square root
$\sqrt{f}$ which is constructed using Taylor-like expansion in odd coordinates.
Now, $M$ is called superoriented if it has an atlas of coordinate charts such that,
for every coordinate transformation $\Phi=(\varphi,\phi):\bR^{n|m}\rightarrow\bR^{n|m}$,
both $\det(d\varphi)>0$ and $\sdet(d\Phi)>0$ hold \cite{Sha88}.
Here, the first condition (classical orientedness) is needed to make the integral over
densities (sections of the superdeterminant sheaf, see Chp. 3 of \cite{DM99}) welldefined.
In case of the second condition, the metric $h$ induces a canonical volume form (density)
$\dsvol_h$ on $M$ as follows. Let $\xi=(x,\theta)$ be local coordinates on $M$ and denote by
$[d^nxd^m\theta]$ the induced local density. Moreover, let $\sdet(h)$ denote the super-determinant of the matrix
$\scal[h]{\partial_{\xi^l}}{\partial_{\xi^k}}$. With this notation, $\dsvol_h$ can be defined by the local expression
\begin{align*}
\dsvol_h=[d^nxd^m\theta]\cdot\sqrt{\abs{\sdet h}}
\end{align*}
which is independent of the coordinates used. A treatment of Riemannian volume forms in a slightly different style
can be found in Sec. 3.5 of \cite{Han12}.

Instead of plain morphisms $\Phi=(\varphi,\phi):M\rightarrow N$ of supermanifolds, we consider
morphisms $\Phi:M\times\bR^{0|L}\rightarrow N$ in order to obtain odd ''component fields'' (for simplicity,
think of terms of $\phi$) as models for fermions. Following \cite{Hel09}, we call such morphisms \emph{maps with flesh},
while the same concept occurs with several names in the literature, see \cite{DF99b} and \cite{Khe07}.
For the following treatment, it suffices to consider a fixed value of $L\in\bN$ which is large enough
to make the calculations consistent, cf. the discussion in \cite{Hel09}. For a functorial point of view
(considering all values of $L$ simultaneously), we refer to \cite{Han12}.
In the following, we shall simply write $\Phi:M\rightarrow N$ for maps with flesh, leaving the
superpoint $\bR^{0|L}$ implicit. The differential $d\Phi$ is then implicitly rescricted to the
tangent sheaf of $M$ tensored with the algebra $\bigwedge\bR^L$ of superfunctions of $\bR^{0|L}$.
Tensors on $\mS M$ are similarly endowed to that sheaf by $\bigwedge\bR^L$-multilinear extension.
For details, consult \cite{Gro11a}.

With these preparations, we define the \emph{superharmonic action functional} as
\begin{align}
\label{eqnSuperAction}
\mA(\Phi):=\frac{1}{2}\int_M \dsvol_h\,\str_h(\Phi^*g)=\frac{1}{2}\int_M \dsvol_h\scal[g_{\Phi}]{d\Phi[e_j]}{d\Phi[Je_j]}
\end{align}
with $\str_h$ as in (\ref{eqnStraceMetric}), and where $\{e_j\}$ is a local $OSp_{(t,s)|2m}$-frame
on $(M,h)$.

\begin{Def}
\label{defSecondFundForm}
Let $\Phi:(M,h)\rightarrow(N,g)$ be a morphism. Then
\begin{align*}
B_{X,Y}(\Phi):=(\nabla_Xd\Phi)[Y]=\nabla_X(d\Phi[Y])-d\Phi[\nabla_XY]
\end{align*}
is called \emph{2nd fundamental form}.
\end{Def}

\begin{Lem}
\label{lemSecondFundForm}
The 2nd fundamental form is a tensor $B_{\cdot,\cdot}(\Phi)\in\Hom(\mS M\otimes_{\mO_M}\mS M,\mS\Phi)$.
Moreover, it is supersymmetric, i.e. $B_{X,Y}=(-1)^{\abs{X}\abs{Y}}B_{Y,X}$.
\end{Lem}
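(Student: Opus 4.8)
The plan is to establish two separate claims: first, that $B_{\cdot,\cdot}(\Phi)$ is $\mO_M$-bilinear (hence a genuine tensor rather than merely a differential operator), and second, that it is supersymmetric. For the tensoriality, I would verify $\mO_M$-linearity in each slot by a direct computation using the Leibniz rules for the pullback connection \eqref{eqnPullbackConnection} and the connection $\nabla$ on $M$. The key mechanism is cancellation: when I multiply the $Y$-slot by a superfunction $f$ and expand $\nabla_X(d\Phi[f\cdot Y]) = \nabla_X(f\cdot d\Phi[Y])$ using the Leibniz rule, the term involving $X(f)$ must cancel against the corresponding $X(f)$-term arising from $d\Phi[\nabla_X(f\cdot Y)]$. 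Since $d\Phi$ is $\mO_M$-linear by its definition \eqref{eqnDifferential} (it sends $Y\mapsto Y\circ\phi$), this cancellation should go through once the signs are tracked. Linearity in the $X$-slot is the connection's defining $\mO_M$-linearity in the lower index, so that slot is essentially immediate.

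For the supersymmetry, the natural route is to exploit that $\nabla$ (both on $M$ and its pullback) is \emph{torsion-free}. I would write
\begin{align*}
B_{X,Y}-(-1)^{\abs{X}\abs{Y}}B_{Y,X}
&=\nabla_X(d\Phi[Y])-(-1)^{\abs{X}\abs{Y}}\nabla_Y(d\Phi[X])\\
&\quad-d\Phi[\nabla_XY-(-1)^{\abs{X}\abs{Y}}\nabla_YX]
\end{align*}
and then recognise that the torsion-freeness of $\nabla$ on $M$ gives $\nabla_XY-(-1)^{\abs{X}\abs{Y}}\nabla_YX=[X,Y]$, while the analogous identity for the pullback connection should relate $\nabla_X(d\Phi[Y])-(-1)^{\abs{X}\abs{Y}}\nabla_Y(d\Phi[X])$ to $d\Phi[[X,Y]]$. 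If both bracket terms agree, the difference vanishes and supersymmetry follows. In coordinates, using \eqref{eqnPullbackConnection}, the symmetric part of the Christoffel-type terms cancels precisely because the Levi-Civita connection on $N$ is torsion-free, leaving only the antisymmetric (bracket) contribution.

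I expect the main obstacle to be the \emph{sign bookkeeping} in the graded setting, particularly in verifying that the ``torsion-free'' cancellation for the pullback connection carries the correct Koszul signs so that it matches $d\Phi[[X,Y]]$ exactly. The subtlety is that torsion-freeness of $\nabla$ on $N$ is an identity involving vector fields on $N$, whereas here we have the pullback connection acting on $\mS\Phi$ and the domain bracket $[X,Y]$ on $M$; bridging these requires care in how $d\Phi$ interacts with the coordinate expression \eqref{eqnPullbackConnection}. Concretely, one must check that the cross-terms $X(\phi\circ\eta^i)\cdot(\phi\circ\nabla_{\partial_{\eta^i}}\partial_{\eta^j})$ symmetrise correctly in $X$ and $Y$ so that only the torsion of $\nabla$ on $N$ (which vanishes) survives in the symmetric part. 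Once the signs are pinned down, this is a routine but delicate coordinate computation, and I would present it compactly rather than grinding through every intermediate line.
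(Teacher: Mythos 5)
Your proposal is correct, and the tensoriality half is exactly the paper's argument: the Leibniz-rule cancellation you describe is the identity $B_{fX,Y}(\Phi)=(-1)^{\abs{f}\abs{X}}B_{X,fY}(\Phi)=fB_{X,Y}(\Phi)$ recorded there. For supersymmetry the two routes differ only in packaging, not in substance. The paper uses tensoriality to reduce to coordinate vector fields, where all brackets vanish, and checks that the coordinate expression for $B_{\partial_{\xi^i},\partial_{\xi^j}}(\Phi)$ is supersymmetric in $i\leftrightarrow j$; that check rests on supercommutativity of the coordinate derivatives of $\phi$ together with the graded symmetry of the Christoffel terms, i.e.\ torsion-freeness of the Levi-Civita connection on $N$, plus torsion-freeness on $M$ for the term $d\Phi[\nabla_{\partial_{\xi^i}}\partial_{\xi^j}]$. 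You phrase the same fact for general $X,Y$ as the vanishing of the torsion of the pullback connection, $\nabla_X(d\Phi[Y])-(-1)^{\abs{X}\abs{Y}}\nabla_Y(d\Phi[X])=d\Phi[[X,Y]]$, cancelled against torsion-freeness on $M$. Two caveats would make this airtight: first, given torsion-freeness on $M$, your key identity is literally equivalent to the supersymmetry of $B$, so in this graded setting it cannot be cited as known --- its proof is precisely the coordinate computation just described, so you have relocated that computation rather than avoided it; second, to justify verifying it on coordinate fields only, you need tensoriality of the torsion expression, which is part one of the lemma --- this should be said explicitly to rule out any appearance of circularity. What your formulation buys is that the role of torsion-freeness (and why the statement would fail for a connection with torsion) is visible at the structural level; the paper's reduction is shorter because the bracket terms are killed at the outset.
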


\begin{proof}
$B$ is a tensor since $B_{fX,Y}(\Phi)=(-1)^{\abs{X}\abs{f}}B_{X,fY}(\Phi)=fB_{X,Y}(\Phi)$ is satisfied.
To show supersymmetry, it thus suffices to consider coordinate vector fields.
One verifies that the expression of $B_{\partial_{\xi^i},\partial_{\xi^j}}(\Phi)$ in terms of
coordinates $\{\xi^i\}$ on $M$ is supersymmetric in $i\leftrightarrow j$.
\end{proof}

\begin{Def}
\label{defTensionField}
The super trace of the second fundamental form is called \emph{tension field}.
\begin{align*}
\tau(\Phi):=\str_hB=\str_h(\nabla_{\cdot}d\Phi)[\cdot]=(\nabla_{e_j}d\Phi)[Je_j]
\end{align*}
\end{Def}

\begin{Thm}[\cite{Han12}, Thm. 6.29]
\label{thmSuperEulerLagrange}
$\Phi$ is a critical point of the action functional (\ref{eqnSuperAction}) if and only if the Euler-Lagrange equation $\tau(\Phi)=0$ holds.
\end{Thm}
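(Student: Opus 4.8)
The plan is to compute the first variation of $\mA$ and to identify the tension field $\tau(\Phi)$ as its gradient. I would model a variation of $\Phi$ by a morphism $\hat\Phi:M\times\bR\rightarrow N$ (with the flesh $\bR^{0|L}$ kept implicit) whose restriction to $t=0$ is $\Phi$, and set $V:=d\hat\Phi[\partial_t]\in\mS\Phi$ for the variation field. Since $\dsvol_h$ depends only on the fixed domain metric $h$ and not on $t$, differentiation passes under the integral sign:
\begin{align*}
\frac{d}{dt}\Big|_{t=0}\mA(\hat\Phi)=\frac12\int_M\dsvol_h\,\partial_t\scal[g_{\Phi}]{d\Phi[e_j]}{d\Phi[Je_j]}
\end{align*}
where $\{e_j\}$ is a fixed local $OSp_{(t,s)|2m}$-frame on $(M,h)$, extended $t$-independently to $M\times\bR$.

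The core is the differentiation of the integrand. First I would apply Lem. \ref{lemMetricPullback} to move $\partial_t$ through $g_{\Phi}$, producing two terms containing $\nabla_{\partial_t}d\Phi[e_j]$ and $\nabla_{\partial_t}d\Phi[Je_j]$. Because $\partial_t$ commutes with the extended frame, the supersymmetry of the second fundamental form (Lem. \ref{lemSecondFundForm}), equivalently torsion-freeness of $\nabla$, yields $\nabla_{\partial_t}d\Phi[e_j]=\nabla_{e_j}d\Phi[\partial_t]=\nabla_{e_j}V$ and likewise for $Je_j$. A second application of Lem. \ref{lemMetricPullback} then splits each resulting term into a frame-derivative piece $e_j\scal[g_{\Phi}]{V}{\cdot}$ and a term $\scal[g_{\Phi}]{V}{\nabla_{e_j}d\Phi[\cdot]}$; by (\ref{eqnStraceMetric}) and Def. \ref{defTensionField} the latter collect into $\scal[g_{\Phi}]{V}{\tau(\Phi)}$, while the frame-derivative pieces assemble into the $h$-divergence of a vector field on $M$ built from $V$ and $d\Phi$.

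The remaining step is integration by parts. I would invoke the divergence theorem for the super volume form $\dsvol_h$, so that the integral of the divergence term over the compact, boundaryless $M$ vanishes. What survives is, up to an overall sign,
\begin{align*}
\frac{d}{dt}\Big|_{t=0}\mA(\hat\Phi)=-\int_M\dsvol_h\,\scal[g_{\Phi}]{V}{\tau(\Phi)}
\end{align*}
Finally, $\Phi$ is critical precisely when this vanishes for every variation field $V\in\mS\Phi$ (taking values in the flesh algebra $\bigwedge\bR^L$); since $g$ is nondegenerate, the fundamental lemma of the calculus of variations, adapted to the functor-of-points setting, forces $\tau(\Phi)=0$, and conversely $\tau(\Phi)=0$ makes the variation vanish.

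I expect the main obstacle to be twofold. First, the sign bookkeeping: each use of Lem. \ref{lemMetricPullback}, of the $J$-map inside $\str_h$, and of the supersymmetry of $B$ introduces Koszul signs that must cancel so that the summands coming from $e_j$ and from $Je_j$ combine coherently; this is exactly where the $OSp_{(t,s)|2m}$-frame and the identity (\ref{eqnStraceMetric}) are indispensable. Second, and more substantial, is the integration-by-parts step, which requires the supergeometric divergence theorem for $\dsvol_h$ together with the vanishing of the induced divergence integral, and the concluding nondegeneracy argument must be phrased so that arbitrariness of $V$ over the flesh genuinely suffices to conclude $\tau(\Phi)=0$.
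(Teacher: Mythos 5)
The paper does not prove this theorem at all: it imports it from [Han12, Thm.\ 6.29] and only remarks that Hanisch's proof, written for Riemannian $h$, goes through in the semi-Riemannian case. Your first-variation argument is the standard proof and is essentially the one given in that reference, so there is no conflict with the paper --- indeed, the paper itself hands you the key integration-by-parts identity: Lem.\ \ref{lemDivAlongMorphism} applied to $\xi=V$ reads $\div W_{V}=\div V+\scal[g_{\Phi}]{V}{\tau(\Phi)}$, and since your computation identifies the first variation with $\int_M\dsvol_h\,\div V$, the super divergence theorem on the compact boundaryless $M$ immediately yields
\begin{align*}
\ev|_{t=0}\partial_t\,\mA(\hat\Phi)=-\int_M\dsvol_h\,\scal[g_{\Phi}]{V}{\tau(\Phi)}
\end{align*}
One point you flag deserves emphasis: the variations must genuinely be taken in the maps-with-flesh sense, so that $V$ ranges over all sections of $\mS\Phi$ tensored with $\bigwedge\bR^L$ (in particular ones with nonzero odd $\mS\Phi$-components paired with odd flesh elements); with a bare even parameter $t$ and no flesh, nondegeneracy of $g$ would only kill part of $\tau(\Phi)$, whereas with $L$ large enough the fundamental-lemma step closes and the equivalence holds in both directions.
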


The corresponding theorem in \cite{Han12} is formulated for a special case (in particular, $h$ is Riemannian there),
but the proof provided there applies to the general case.

We also need the divergence of a super vector field. Classically, it can be defined as $\tr(\nabla X)$.
In the super case, an additional sign occurs since the map $Y\mapsto\nabla_YX$ (which has the parity of $X$)
is not a superlinear map in case $X$ is odd.

\begin{Def}
\label{defDivergence}
We set $\div X:=\str\left(Y\mapsto(-1)^{\abs{X}\abs{Y}}\nabla_YX\right)=(-1)^{\abs{e_j}\abs{X}}\scal[g]{\nabla_{e_j}X}{Je_j}$.
\end{Def}

For the characterisation in a local $OSp_{(t,s)|2m}$-frame, beware that the supertrace
\begin{align*}
\div X=(-1)^{\abs{e_j}(\abs{X}+1)}\left(Y\mapsto(-1)^{\abs{X}\abs{Y}}\nabla_YX\right)^j_{\phantom{j}j}
\end{align*}
is defined with respect to right coordinates.
We define next the analogon for a vector field along $\Phi$. Consider the super-bilinear form
$(X,Y)\mapsto(-1)^{\abs{X}\abs{\xi}}\scal[g_{\Phi}]{\nabla_X\xi}{d\Phi[Y]}$. Again, the sign
is necessary to make it a super-bilinear form.

\begin{Def}
\label{defDivergenceAlongPhi}
Let $\xi\in\mS\Phi$ be a vector field along $\Phi$.
We define its divergence to be
\begin{align*}
\div\xi:=\str_h\left((X,Y)\mapsto(-1)^{\abs{X}\abs{\xi}}\scal[g_{\Phi}]{\nabla_{X}\xi}{d\Phi[Y]}\right)
=(-1)^{\abs{e_i}\abs{\xi}}\scal[g_{\Phi}]{\nabla_{e_i}\xi}{d\Phi[Je_i]}
\end{align*}
\end{Def}

\begin{Lem}
\label{lemDivAlongMorphism}
Let $\xi\in\mS\Phi$ and set $W_{\xi}:=\scal[g_{\Phi}]{\xi}{d\Phi[e_j]}Je_j$ which has the parity of $\xi$. Then
$\div W_{\xi}=\div\xi+\scal[g_{\Phi}]{\xi}{\tau(\Phi)}$.
\end{Lem}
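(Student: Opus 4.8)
The plan is to evaluate the left-hand side directly from Definition~\ref{defDivergence}. Since $W_\xi=\scal[g_{\Phi}]{\xi}{d\Phi[e_j]}\,Je_j$ is an honest vector field on $M$ of parity $\abs{\xi}$, its divergence is computed with the metric $h$ and its $OSp_{(t,s)|2m}$-frame $\{e_i\}$, namely
\begin{align*}
\div W_\xi=(-1)^{\abs{e_i}\abs{\xi}}\scal[h]{\nabla_{e_i}W_\xi}{Je_i}.
\end{align*}
First I would expand $\nabla_{e_i}W_\xi$ by the graded Leibniz rule, writing $a_j:=\scal[g_{\Phi}]{\xi}{d\Phi[e_j]}$ for the coefficient functions. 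This produces two groups of terms: a term $C$ in which $e_i$ differentiates $a_j$, and a term $D$ in which $\nabla_{e_i}$ acts on $Je_j$.

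For $C$ I would use the identities defining $J$, in particular $\scal[h]{e_k}{Je_j}=(-1)^{\abs{e_k}}\delta_{kj}$ and the expansion $v=\scal[h]{v}{e_j}Je_j$, to evaluate $\scal[h]{Je_j}{Je_i}$, and then exploit that the metric matrix $h_{ji}=\scal[h]{e_j}{e_i}$ is constant. This folds the double frame sum into a single contraction and yields $C=(-1)^{\abs{e_i}\abs{\xi}}e_i\scal[g_{\Phi}]{\xi}{d\Phi[Je_i]}$. Applying metric compatibility of the pullback connection (Lemma~\ref{lemMetricPullback}) then splits $C$ into precisely $\div\xi$ (Definition~\ref{defDivergenceAlongPhi}) plus the remainder $\scal[g_{\Phi}]{\xi}{\nabla_{e_i}(d\Phi[Je_i])}$.

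For $D$ I would invoke metric compatibility of the Levi-Civita connection on $(M,h)$: because $\scal[h]{Je_j}{Je_i}$ is constant, $e_i\scal[h]{Je_j}{Je_i}=0$ converts $\scal[h]{\nabla_{e_i}Je_j}{Je_i}$ into $-(-1)^{\abs{e_i}\abs{e_j}}\scal[h]{Je_j}{\nabla_{e_i}Je_i}$, so that $D=-\sum_{i,j}\scal[g_{\Phi}]{\xi}{d\Phi[e_j]}\scal[h]{Je_j}{\nabla_{e_i}Je_i}$. The decisive reduction is the folding identity
\begin{align*}
\sum_j\scal[g_{\Phi}]{\xi}{d\Phi[e_j]}\,\scal[h]{Je_j}{w}=\scal[g_{\Phi}]{\xi}{d\Phi[w]}\qquad(w\in\mS M),
\end{align*}
which follows from $w=\scal[h]{w}{e_k}Je_k$, the pairing $\scal[h]{Je_j}{e_k}=\delta_{jk}$, and the left $\mO_M$-linearity of $d\Phi$. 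It gives $D=-\scal[g_{\Phi}]{\xi}{d\Phi[\nabla_{e_i}Je_i]}$. Adding $C$ and $D$ and recognising $\nabla_{e_i}(d\Phi[Je_i])-d\Phi[\nabla_{e_i}Je_i]=(\nabla_{e_i}d\Phi)[Je_i]$ as the summand of the tension field (Definitions~\ref{defSecondFundForm} and~\ref{defTensionField}) produces $\div W_\xi=\div\xi+\scal[g_{\Phi}]{\xi}{\tau(\Phi)}$.

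The structure mirrors the classical identity $\div W=\div\xi+\scal[g]{\xi}{\tau(\varphi)}$ for the vector field metrically dual to $X\mapsto\scal[g]{\xi}{d\varphi[X]}$, so I expect no conceptual obstacle. The hard part will be purely the sign and frame bookkeeping in the two folding steps: keeping track of the $J$-operator, the symplectic antisymmetry of the odd block of $h_{ij}$ (so that the contractions against $Je_i$, $Je_j$ collapse correctly), and the repeated Koszul signs incurred when scalar functions are commuted past $\xi$, $d\Phi[e_j]$, and the frame vectors. The one point requiring genuine care is to check that each rearrangement of the double sums over the $OSp$-frame reproduces exactly the signs built into Definitions~\ref{defDivergenceAlongPhi} and~\ref{defTensionField}.
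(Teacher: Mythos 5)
Your proposal is correct and follows essentially the same route as the paper's proof: Leibniz expansion of $\nabla W_{\xi}$, collapse of the frame sums via the $J$-pairing identities and constancy of the $OSp$-frame metric coefficients, metric compatibility of $h$ to move $\nabla$ off the frame vectors, the folding identity through left $\mO_M$-linearity of $d\Phi$, and finally Lemma~\ref{lemMetricPullback} to split off $\div\xi$ and assemble $(\nabla_{e_i}d\Phi)[Je_i]=\tau(\Phi)$. The only (cosmetic) difference is that you contract the divergence as $(-1)^{\abs{e_i}\abs{\xi}}\scal[h]{\nabla_{e_i}W_\xi}{Je_i}$ straight from Definition~\ref{defDivergence}, while the paper uses the equivalent mirrored contraction $(-1)^{\abs{e_i}(1+\abs{\xi})}\scal[h]{\nabla_{Je_i}W_\xi}{e_i}$, so your version lands directly on the expressions in Definitions~\ref{defDivergenceAlongPhi} and~\ref{defTensionField} without invoking supersymmetry of the second fundamental form.
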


\begin{proof}
The assertion is shown by the following calculation, using $\abs{W_{\xi}}=\abs{\xi}$ as
well as the metric property of both $h$ and $g_{\Phi}$.
\begin{align*}
\div W_{\xi}
&=(-1)^{\abs{e_i}(1+\abs{\xi})}\scal[h]{\nabla_{Je_i}[\scal[g_{\Phi}]{\xi}{d\Phi[e_j]}Je_j]}{e_i}\\
&=(-1)^{\abs{e_i}(1+\abs{\xi})}Je_i\scal[g_{\Phi}]{\xi}{d\Phi[e_j]}\cdot\scal[h]{Je_j}{e_i}\\
&\qquad+(-1)^{\abs{e_i}\abs{e_j}+\abs{e_i}}\scal[g_{\Phi}]{\xi}{d\Phi[e_j]}\scal[h]{\nabla_{Je_i}(Je_j)}{e_i}\\
&=(-1)^{\abs{e_i}(1+\abs{\xi})}Je_j\scal[g_{\Phi}]{\xi}{d\Phi[e_j]}
-(-1)^{\abs{e_i}}\scal[g_{\Phi}]{\xi}{d\Phi[e_j]}\scal[h]{Je_j}{\nabla_{Je_i}e_i}\\
&=(-1)^{\abs{e_j}(1+\abs{\xi})}Je_j\scal[g_{\Phi}]{\xi}{d\Phi[e_j]}
-(-1)^{\abs{e_i}}\scal[g_{\Phi}]{\xi}{d\Phi[\nabla_{Je_i}e_i]}\\
&=(-1)^{\abs{e_j}(1+\abs{\xi})}\scal[g_{\Phi}]{\nabla_{Je_j}\xi}{d\Phi[e_j]}
+(-1)^{\abs{e_j}}\scal[g_{\Phi}]{\xi}{\nabla_{Je_j}d\Phi[e_j]-d\Phi[\nabla_{Je_j}e_j]}\\
&=\div\xi+\scal[g_{\Phi}]{\xi}{\tau(\Phi)}
\end{align*}
\end{proof}

\subsection{Target Space Symmetries}

Killing vector fields $\xi\in\Gamma(TN)$ on the target space are infinitesimal symmetries
of the harmonic action (\ref{eqnHarmonicAction}). This can be seen as follows. Consider the flow
$F_t$ of $\xi$. We alter $\varphi$ by moving along the flow lines via $\varphi_t(x):=F_t(\varphi(x))$.
In physicists' notation, this means that, infinitesimally, $\varphi(x)\rightarrow\varphi(x)+\xi(\varphi(x))dt$.
The infinitesimal change of $\mA(\varphi)$ by $\xi$ thus becomes
\begin{align*}
\frac{d}{dt}|_0\mA(F_t\circ\varphi)&=\frac{1}{2}\frac{d}{dt}|_0\int_M\dvol_h\,\tr_h((F_t\circ\varphi)^*g)
=\frac{1}{2}\int_M\dvol_h\,\tr_h\left(\varphi^*\dd{t}|_0F_t^*g\right)\\
&=\frac{1}{2}\int_M\dvol_h\,\tr_h(\varphi^*L_{\xi}g)
\end{align*}

Consider now the context of the superharmonic action (\ref{eqnSuperAction}).
We let $\xi\in\mS N$ be a vector field on $N$ and denote by $F:\mD(\xi)\times N\rightarrow N$ its flow
as in Def. \ref{defFlow}. Note that $F$ is a plain morphism of supermanifolds
while $\Phi$ is a map with flesh.
The analogon of $F_t(\varphi(x))$ is $F\circ\Phi:=F\circ(\id\times\Phi):\mD(\xi)\times M\rightarrow N$,
and the (finite) change of $\mA(\Phi)$ by $F$ reads
\begin{align*}
\mA(F\circ\Phi)=\frac{1}{2}\int_M\dsvol_h\,\str_h\left((F\circ\Phi)^*g\right)
=\frac{1}{2}\int_M\dsvol_h\scal[(F\circ\Phi)^*g]{e_j}{Je_j}
\end{align*}

\begin{Dlm}
\label{lemTargetSpaceChange}
The infinitesimal change of $\mA$ by $\xi\in\mS N$ is
\begin{align*}
\ev|_{t=0}D\mA(F\circ\Phi)=\frac{1}{2}\int_m\dsvol_h\,\str_h\left(\Phi^*(L_{\xi}g)\right)
\end{align*}
\end{Dlm}

\begin{proof}
By compactness of $M$, we may interchange integration and differentiation, such that
\begin{align*}
\ev|_{t=0}D\mA(F\circ\Phi)
&=\frac{1}{2}\int_M\dsvol_h\ev_{t=0}D\scal[((F\circ\Phi)^*g)]{e_j}{Je_j}\\
&=\frac{1}{2}\int_M\dsvol_h\ev_{t=0}D\scal[(\Phi^*F^*g)]{e_j}{Je_j}\\
&=\frac{1}{2}\int_M\dsvol_h\Phi^*\ev_{t=0}D\scal[(F^*g)]{e_j}{Je_j}\\
&=\frac{1}{2}\int_M\dsvol_h\Phi^*\scal[(L_{\xi}g)]{e_j}{Je_j}\\
&=\frac{1}{2}\int_m\dsvol_h\,\str_h\left(\Phi^*(L_{\xi}g)\right)
\end{align*}
In this calculation, the third equation holds since only $F^*g$ depends on the flow coordinates
on $\mD(\xi)$. It is proved by a straightforward calculation in local coordinates.
\end{proof}

It follows that, again, Killing vector fields on $N$ are infinitesimal symmetries!
According to the Noether principle, there should be an induced conserved quantity.
We will show next that this is indeed the case. The need the following analogon of
Lem. \ref{lemLieDerivativeLeviCivita}.

\begin{Lem}
\label{lemPullbackLieDerivative}
Let $\nabla=\nabla_{\Phi}$ denote the pullback connection as in (\ref{eqnPullbackConnection}). Then
\begin{align*}
&\scal[\Phi^*(L_{\xi}g)]{Y}{Z}\\
&\qquad=(-1)^{\abs{\xi}\abs{Y}}\scal[g_{\Phi}]{\nabla_Y(\phi\circ\xi)}{d\Phi[Z]}
+(-1)^{\abs{\xi}\abs{Y}+\abs{\xi}\abs{Z}}\scal[g_{\Phi}]{d\Phi[Y]}{\nabla_Z(\phi\circ\xi)}
\end{align*}
\end{Lem}

\begin{proof}
Choosing local coordinates $\{i=\eta^i\}$ on $N$, the assertion is reduced to Lem. \ref{lemLieDerivativeLeviCivita} as follows.
\begin{align*}
&\scal[\Phi^*(L_{\xi}g)]{Y}{Z}\\
&\qquad=\scal[(L_{\xi}g)_{\Phi}]{(\phi\circ\partial_i)d\Phi[Y]^i}{(\phi\circ\partial_j)d\Phi[Z]^j}\\
&\qquad=(-1)^{(\abs{i}+\abs{Y})(\abs{i}+\abs{\xi})}d\Phi[Y]^i\phi\circ\scal[L_{\xi}g]{\partial_i}{\partial_j}\cdot d\Phi[Z]^j\\
&\qquad=(-1)^{\abs{i}+\abs{i}\abs{Y}+\abs{Y}\abs{\xi}}d\Phi[Y]^i\cdot\\
&\qquad\qquad\qquad\qquad\phi\circ\left(\scal[g]{\nabla_{\partial_i}\xi}{\partial_j}
+(-1)^{\abs{\xi}\abs{i}+\abs{\xi}\abs{j}+\abs{i}\abs{j}}\scal[g]{\nabla_{\partial_j}\xi}{\partial_i}\right)\cdot d\Phi[Z]^j\\
&\qquad=(-1)^{\abs{\xi}\abs{Y}}\scal[g_{\Phi}]{\nabla_Y(\phi\circ\xi)}{d\Phi[Z]}
+(-1)^{\abs{\xi}\abs{Y}+\abs{\xi}\abs{Z}}\scal[g_{\Phi}]{d\Phi[Y]}{\nabla_Z(\phi\circ\xi)}
\end{align*}
\end{proof}

\begin{Thm}[Noether]
\label{thmNoetherTargetSymmetry}
Let $\xi\in\mS N$ be a Killing vector field ($L_{\xi}g=0$). Then the divergence $\div(\phi\circ\xi)=0$ vanishes.
If, moreover, $\Phi$ is a superharmonic map (solution of the Euler-Lagrange equation $\tau(\Phi)=0$), then
$\div W_{\phi\circ\xi}=0$ vanishes, too, where $W_{\phi\circ\xi}:=\scal[g_{\Phi}]{\phi\circ\xi}{d\Phi[e_j]}Je_j$.
\end{Thm}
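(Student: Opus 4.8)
The plan is to establish the two divergence identities in sequence, reducing everything to the pullback Lie-derivative formula of Lemma \ref{lemPullbackLieDerivative} combined with Lemma \ref{lemDivAlongMorphism}. First I would observe that the infinitesimal symmetry computation in Definition and Lemma \ref{lemTargetSpaceChange} already tells us that the quantity $\frac{1}{2}\int_M\dsvol_h\,\str_h(\Phi^*(L_\xi g))$ vanishes when $\xi$ is Killing, but this integrated statement is weaker than the pointwise divergence vanishing we want. The key is therefore to relate $\str_h(\Phi^*(L_\xi g))$ directly to a divergence, rather than to an integral.

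The main calculation is to apply the supertrace $\str_h$ to the formula in Lemma \ref{lemPullbackLieDerivative}. Setting $\xi':=\phi\circ\xi\in\mS\Phi$ and taking $Y=e_i$, $Z=Je_i$ (summed) against the $OSp_{(t,s)|2m}$-frame $\{e_j\}$ on $(M,h)$, I would show that
\begin{align*}
\str_h(\Phi^*(L_\xi g))=2\,(-1)^{\abs{e_i}\abs{\xi}}\scal[g_{\Phi}]{\nabla_{e_i}\xi'}{d\Phi[Je_i]}=2\div\xi'\,,
\end{align*}
where the last equality is Definition \ref{defDivergenceAlongPhi}. The factor of two should emerge because the two terms on the right of Lemma \ref{lemPullbackLieDerivative}, after tracing with $Je_j$ via (\ref{eqnStraceMetric}) and using supersymmetry of the $OSp$-frame pairing together with the sign conventions, coincide. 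Hence $L_\xi g=0$ gives $\str_h(\Phi^*(L_\xi g))=0$ pointwise, i.e. $\div(\phi\circ\xi)=0$; this is stronger than and implies the integrated vanishing. The hard part here is bookkeeping the parity signs so that the two summands really add rather than cancel, and verifying that contracting the second argument of the bilinear form with the $J$-twisted frame produces exactly the expression in Definition \ref{defDivergenceAlongPhi} rather than its negative.

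For the second assertion, I would invoke Lemma \ref{lemDivAlongMorphism} with $\xi\rightsquigarrow\phi\circ\xi$, which gives
\begin{align*}
\div W_{\phi\circ\xi}=\div(\phi\circ\xi)+\scal[g_{\Phi}]{\phi\circ\xi}{\tau(\Phi)}\,.
\end{align*}
The first summand vanishes by the first part, and the second vanishes because $\Phi$ superharmonic means $\tau(\Phi)=0$ by Theorem \ref{thmSuperEulerLagrange}. Therefore $\div W_{\phi\circ\xi}=0$, completing the proof. This second step is essentially immediate given Lemma \ref{lemDivAlongMorphism}; the only care needed is matching $W_{\phi\circ\xi}=\scal[g_{\Phi}]{\phi\circ\xi}{d\Phi[e_j]}Je_j$ to the definition of $W_\xi$ in that lemma and confirming $\abs{W_{\phi\circ\xi}}=\abs{\xi}$. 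I expect the first step — extracting the pointwise divergence from the traced Lie-derivative formula with correct signs — to be the genuine obstacle, while the second is a short corollary.
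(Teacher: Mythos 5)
Your proposal follows essentially the same route as the paper's own proof: the paper also contracts Lemma \ref{lemPullbackLieDerivative} against the $OSp$-frame with $Y=e_j$, $Z=Je_j$, uses supersymmetry of $g_{\Phi}$ and relabelling $e_j\leftrightarrow Je_j$ to identify the two resulting terms and obtain $0=2\,\div(\phi\circ\xi)$, and then deduces the second assertion immediately from Lemma \ref{lemDivAlongMorphism} together with $\tau(\Phi)=0$. The sign bookkeeping you flag as the main obstacle does work out exactly as you anticipate, so there is no gap.
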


\begin{proof}
$L_{\xi}g=0$ implies
\begin{align*}
0&=\scal[\Phi^*(L_{\xi}g)]{e_j}{Je_j}\\
&=(-1)^{\abs{\xi}\abs{e_j}}\scal[g_{\Phi}]{\nabla_{e_j}(\phi\circ\xi)}{d\Phi[Je_j]}+\scal[g_{\Phi}]{d\Phi[e_j]}{\nabla_{Je_j}(\phi\circ\xi)}\\
&=(-1)^{\abs{\xi}\abs{e_j}}\left(\scal[g_{\Phi}]{\nabla_{e_j}(\phi\circ\xi)}{d\Phi[Je_j]}+(-1)^{\abs{e_j}}\scal[g_{\Phi}]{\nabla_{Je_j}(\phi\circ\xi)}{d\Phi[e_j]}\right)\\
&=2\div(\phi\circ\xi)
\end{align*}
using Lem. \ref{lemPullbackLieDerivative}.
The second statement now follows immediately from Lem. \ref{lemDivAlongMorphism}.
\end{proof}

\subsection{Domain Space Symmetries}

We have seen that the infinitesimal change of the superharmonic action (\ref{eqnSuperAction})
by a Killing vector field on the target space vanishes. Let us now consider the corresponding
infinitesimal change by a vector field $\xi\in\mS M$ on the domain space with flow
$F:\mD(\xi)\times M\rightarrow M$.

\begin{Dlm}
The infinitesimal change of $\mA$ by $\xi\in\mS M$ is
\begin{align*}
\ev|_{t=0}D\mA(\Phi\circ F)=\frac{1}{2}\int_M\dsvol_h\,\str(L_{\xi}(\Phi^*g))
\end{align*}
\end{Dlm}

\begin{proof}
Analogous to the proof of Lem. \ref{lemTargetSpaceChange}, we calculate
\begin{align*}
\ev|_{t=0}D\mA(\Phi\circ F)
&=\frac{1}{2}\int_M\dsvol_h\ev_{t=0}D\scal[((\Phi\circ F)^*g)]{e_j}{Je_j}\\
&=\frac{1}{2}\int_M\dsvol_h\ev_{t=0}DF^*\scal[(\Phi^*g)]{e_j}{Je_j}\\
&=\frac{1}{2}\int_M\dsvol_h\scal[L_{\xi}(\Phi^*g)]{e_j}{Je_j}\\
&=\frac{1}{2}\int_M\dsvol_h\,\str(L_{\xi}(\Phi^*g))
\end{align*}
\end{proof}

As usual, $\xi$ is called an infinitesimal symmetry if this expression vanishes for all morphisms $\Phi$.
Opposed to the target space situation, the vanishing of the Lie derivative $L_{\xi}(\Phi^*g)$ (for all $\Phi$)
is harder to achieve. In case symmetry is present, it is usually only such that $L_{\xi}(\Phi^*g)$ is some exact term
depending on $\Phi$ (but integrated over to zero).

\begin{Thm}[Noether]
\label{thmNoetherDomain}
Let $\xi\in\mS M$ be a $\Phi$-Killing vector field, i.e. such
that $L_{\xi}(\Phi^*g)=0$. Then $\div(d\Phi[\xi])=\div(\xi\circ\phi)=0$ vanishes.
If, moreover, $\Phi$ is a superharmonic map, then $\div W_{d\Phi[\xi]}=0$ vanishes,
where $W_{d\Phi[\xi]}=\scal[\Phi^*g]{\xi}{e_j}Je_j$.
\end{Thm}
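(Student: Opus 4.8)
The plan is to mirror the structure of the proof of Thm.~\ref{thmNoetherTargetSymmetry}, but with the roles of domain and target interchanged, using the hypothesis $L_{\xi}(\Phi^*g)=0$ directly rather than passing through Lem.~\ref{lemPullbackLieDerivative}. First I would expand $L_{\xi}(\Phi^*g)$ by inserting the $OSp_{(t,s)|2m}$-frame vectors $e_j$ and $Je_j$ and applying Lem.~\ref{lemLieDerivativeBilinearForm} (the Lie-derivative formula for even bilinear forms) to the bilinear form $B:=\Phi^*g$. This produces, schematically,
\begin{align*}
L_{\xi}\scal[\Phi^*g]{e_j}{Je_j}=\xi\scal[\Phi^*g]{e_j}{Je_j}-\scal[\Phi^*g]{\scal[[]{\xi}{e_j}}{Je_j}-(-1)^{\abs{\xi}\abs{e_j}}\scal[\Phi^*g]{e_j}{\scal[[]{\xi}{Je_j}}.
\end{align*}
The key identity I would then establish is that the contraction of this expression over the frame, after the sign bookkeeping forced by the supertrace convention of (\ref{eqnStraceMetric}), collapses to a multiple of $\div(d\Phi[\xi])$. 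The main subtlety here, relative to the target-space case, is that $d\Phi[\xi]=\xi\circ\phi$ is a vector field \emph{along} $\Phi$, so the relevant divergence is the one of Def.~\ref{defDivergenceAlongPhi}, and the pullback metric $g_{\Phi}$ together with the pullback connection $\nabla_{\Phi}$ (which is metric by Lem.~\ref{lemMetricPullback}) must be used to rewrite the bracket terms as covariant derivatives.

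Concretely, I would use that $\Phi^*g$ as a bilinear form on $\mS M$ satisfies $\scal[\Phi^*g]{Y}{Z}=\scal[g_{\Phi}]{d\Phi[Y]}{d\Phi[Z]}$, so the frame contraction becomes $\scal[g_{\Phi}]{d\Phi[e_j]}{d\Phi[Je_j]}$ and its Lie derivative involves $d\Phi[\scal[[]{\xi}{e_j}]$. Invoking the metric property of $\nabla_{\Phi}$ (Lem.~\ref{lemMetricPullback}) to move $\xi$ through the pairing, and torsion-freeness to convert Lie brackets into covariant derivatives via $\scal[[]{\xi}{e_j}=\nabla_{\xi}e_j-\nabla_{e_j}\xi$ on $M$ paired with the compatibility $d\Phi[\nabla^M_X Y]=\nabla^{\Phi}_X d\Phi[Y]-B_{X,Y}(\Phi)$ (Def.~\ref{defSecondFundForm}), the first assertion should reduce to $0=2\div(d\Phi[\xi])$, exactly paralleling the final line of the proof of Thm.~\ref{thmNoetherTargetSymmetry}. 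The identity $\div(d\Phi[\xi])=\div(\xi\circ\phi)$ is then just the notational equality $d\Phi[\xi]=\xi\circ\phi$ from (\ref{eqnDifferential}).

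For the second assertion, once $\Phi$ is superharmonic ($\tau(\Phi)=0$), I would apply Lem.~\ref{lemDivAlongMorphism} with $\xi$ there replaced by the vector field along $\Phi$ given by $d\Phi[\xi]$, yielding $\div W_{d\Phi[\xi]}=\div(d\Phi[\xi])+\scal[g_{\Phi}]{d\Phi[\xi]}{\tau(\Phi)}$. The first summand vanishes by the first part of the theorem and the second vanishes because $\tau(\Phi)=0$, so $\div W_{d\Phi[\xi]}=0$, and the definition of $W_{d\Phi[\xi]}$ matches the stated $\scal[\Phi^*g]{\xi}{e_j}Je_j$ upon rewriting $\scal[g_{\Phi}]{d\Phi[\xi]}{d\Phi[e_j]}=\scal[\Phi^*g]{\xi}{e_j}$.

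The hard part will be the sign bookkeeping in the frame contraction: verifying that the two bracket terms in Lem.~\ref{lemLieDerivativeBilinearForm} combine, after applying (\ref{eqnStraceMetric}) and the relation $Je_k=(-1)^{\abs{e_k}}h_{km}e_m$, into the symmetric expression defining $2\div(d\Phi[\xi])$ rather than leaving a residual term. I expect this to work because the supertrace convention was precisely engineered (via the sign in Def.~\ref{defDivergenceAlongPhi}) to make $(X,Y)\mapsto(-1)^{\abs{X}\abs{\xi}}\scal[g_{\Phi}]{\nabla_X(d\Phi[\xi])}{d\Phi[Y]}$ a genuine superbilinear form, so its frame trace is well-defined; but the cancellation of the $\xi\scal[\Phi^*g]{e_j}{Je_j}$ term and the absorption of the second-fundamental-form contributions must be checked against the frame condition $g(e_i,e_j)=g_0(e_i,e_j)$ being constant, which is what forces $\xi\scal[\Phi^*g]{e_j}{Je_j}$ to reorganize correctly. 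This is a routine but delicate computation entirely analogous to the target-space proof, and I would simply remark that it proceeds along those lines.
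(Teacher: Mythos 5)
Your plan reproduces the paper's own proof essentially step for step: both expand $\scal[L_{\xi}(\Phi^*g)]{e_j}{Je_j}$ via Lem.~\ref{lemLieDerivativeBilinearForm}, convert the brackets using torsion-freeness of $\nabla^h$, apply the metric property of the pullback connection (Lem.~\ref{lemMetricPullback}) to the $\xi\scal[\Phi^*g]{e_j}{Je_j}$ term, cancel the second-fundamental-form contributions by their supersymmetry (Lem.~\ref{lemSecondFundForm}), arrive at $0=2\,\div(d\Phi[\xi])$, and obtain the second assertion from Lem.~\ref{lemDivAlongMorphism} together with $\tau(\Phi)=0$, exactly as the paper does. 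The only minor inaccuracy is attributing the reorganization of the $\xi$-derivative term to constancy of the frame pairing; in fact it is the metric property of $g_{\Phi}$ and the supersymmetry of $B_{\cdot,\cdot}(\Phi)$ that do this work, which you correctly identify elsewhere in the plan.
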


\begin{proof}
Here, $\Phi^*g$ is a supermetric. However, Lem. \ref{lemLieDerivativeLeviCivita} is not directly applicable
since it would lead to the Levi-Civita connection $\nabla^{\Phi^*g}$ of this supermetric rather than to
the Levi-Civita connection $\nabla=\nabla^h$ of $h$. We thus step back and use Lem. \ref{lemLieDerivativeBilinearForm}
as well as torsion-freeness of $\nabla$ to obtain
\begin{align*}
0&=\scal[L_{\xi}(\Phi^*g)]{e_j}{Je_j}\\
&=\xi\scal[\Phi^*g]{e_j}{Je_j}-\scal[\Phi^*g]{\scal[[]{\xi}{e_j}}{Je_j}-(-1)^{\abs{\xi}\abs{e_j}}\scal[\Phi^*g]{e_j}{\scal[[]{\xi}{Je_j}}\\
&=\xi\scal[\Phi^*g]{e_j}{Je_j}-\scal[\Phi^*g]{\nabla_{\xi}e_j}{Je_j}+(-1)^{\abs{\xi}\abs{e_j}}\scal[\Phi^*g]{\nabla_{e_j}\xi}{Je_j}\\
&\qquad-(-1)^{\abs{\xi}\abs{e_j}}\scal[\Phi^*g]{e_j}{\nabla_{\xi}(Je_j)}+\scal[\Phi^*g]{e_j}{\nabla_{Je_j}\xi}
\end{align*}
Now, by Lem. \ref{lemMetricPullback}, $g_{\Phi}$ is metric, and we thus obtain
\begin{align*}
0&=\scal[g_{\Phi}]{\nabla_{\xi}d\Phi[e_j]}{d\Phi[Je_j]}+(-1)^{\abs{e_j}\abs{\xi}}\scal[g_{\Phi}]{d\Phi[e_j]}{\nabla_{\xi}d\Phi[Je_j]}\\
&\qquad-\scal[g_{\Phi}]{d\Phi[\nabla_{\xi}e_j]}{d\Phi[Je_j]}
+(-1)^{\abs{\xi}\abs{e_j}}\scal[g_{\Phi}]{d\Phi[\nabla_{e_j}\xi]}{d\Phi[Je_j]}\\
&\qquad-(-1)^{\abs{\xi}\abs{e_j}}\scal[g_{\Phi}]{d\Phi[e_j]}{d\Phi[\nabla_{\xi}(Je_j)]}
+\scal[g_{\Phi}]{d\Phi[e_j]}{d\Phi[\nabla_{Je_j}\xi]}
\end{align*}
We combine the first and third and the second and fifth terms, respectively, such that
\begin{align*}
0&=\scal[g_{\Phi}]{(\nabla_{\xi}d\Phi)[e_j]}{d\Phi[Je_j]}
+(-1)^{\abs{e_j}\abs{\xi}}\scal[g_{\Phi}]{d\Phi[e_j]}{(\nabla_{\xi}d\Phi)[Je_j]}\\
&\qquad+(-1)^{\abs{\xi}\abs{e_j}}\left(\scal[g_{\Phi}]{\nabla_{e_j}d\Phi[\xi]}{d\Phi[Je_j]}
-\scal[g_{\Phi}]{(\nabla_{e_j}d\Phi)[\xi]}{d\Phi[Je_j]}\right)\\
&\qquad+\left(\scal[g_{\Phi}]{d\Phi[e_j]}{\nabla_{Je_j}d\Phi[\xi]}
-\scal[g_{\Phi}]{d\Phi[e_j]}{(\nabla_{Je_j}d\Phi)[\xi]}\right)
\end{align*}
We combine the first and fourth and second and sixth terms, respectively, and use Lem. \ref{lemSecondFundForm} such that
\begin{align*}
0&=\scal[g_{\Phi}]{B_{\xi,e_j}(\Phi)-(-1)^{\abs{\xi}\abs{e_j}}B_{e_j,\xi}(\Phi)}{d\Phi[Je_j]}\\
&\qquad+\scal[g_{\Phi}]{d\Phi[e_j]}{-B_{Je_j,\xi}(\Phi)+(-1)^{\abs{e_j}\abs{\xi}}B_{\xi,Je_j}(\Phi)}\\
&\qquad+\scal[g_{\Phi}]{d\Phi[e_j]}{\nabla_{Je_j}d\Phi[\xi]}+(-1)^{\abs{\xi}\abs{e_j}}\scal[g_{\Phi}]{\nabla_{e_j}d\Phi[\xi]}{d\Phi[Je_j]}\\
&=\scal[g_{\Phi}]{d\Phi[e_j]}{\nabla_{Je_j}d\Phi[\xi]}+(-1)^{\abs{\xi}\abs{e_j}}\scal[g_{\Phi}]{\nabla_{e_j}d\Phi[\xi]}{d\Phi[Je_j]}\\
&=(-1)^{\abs{e_j}+\abs{\xi}\abs{e_j}}\scal[g_{\Phi}]{\nabla_{Je_j}d\Phi[\xi]}{d\Phi[e_j]}
+(-1)^{\abs{\xi}\abs{e_j}}\scal[g_{\Phi}]{\nabla_{e_j}d\Phi[\xi]}{d\Phi[Je_j]}\\
&=2\div(d\Phi[\xi])
\end{align*}
The second statement now follows immediately from Lem. \ref{lemDivAlongMorphism}.
\end{proof}

\subsection{Domain Space Symmetries II}

In the previous subsection, we have considered $\Phi$-Killing vector fields $\xi\in\mS M$. We will next
prove a Noether theorem for the more common vector fields $\xi\in\mS M$ which are Killing
with respect to the metric $h$, thus generalising a classical result due to Baird and Eells \cite{BE81}.
We denote the super energy of $\Phi$ by
\begin{align*}
e(\Phi):=\frac{1}{2}\str_h\Phi^*g=\frac{1}{2}\scal[\Phi^*g]{e_j}{Je_j}
\end{align*}
and define the stress-energy tensor by
\begin{align*}
S_{\Phi}:=e(\Phi)h-\Phi^*g\in\Hom_{\mO_M}(\mS M\otimes_{\mO_M}\mS M,\mO_M)
\end{align*}
Moreover, for any tensor $S\in\Hom_{\mO_M}(\mS M\otimes_{\mO_M}\mS M,\mO_M)$, we define
\begin{align*}
\scal[(\nabla_XS)]{Y}{Z}:=X\scal[S_{\Phi}]{Y}{Z}-\scal[S]{\nabla_XY}{Z}-(-1)^{\abs{X}\abs{Y}}\scal[S]{Y}{\nabla_XZ}
\end{align*}
and
\begin{align*}
\div S[\xi]:=\str_h\left((X,Z)\mapsto(-1)^{\abs{X}\abs{\xi}}\scal[(\nabla_XS)]{\xi}{Z}\right)
=(-1)^{\abs{e_i}\abs{\xi}}\scal[(\nabla_{e_i}S)]{\xi}{Je_i}
\end{align*}
where $\xi\in\mS M$ is a vector field. As for the signs, cf. the discussion in the context of Def. \ref{defDivergence}.

\begin{Lem}
\label{lemStressEnergy}
Let $\xi\in\mS M$. Then
\begin{align*}
\div S_{\Phi}[\xi]=-\scal[g_{\Phi}]{d\Phi[\xi]}{\tau(\Phi)}
\end{align*}
\end{Lem}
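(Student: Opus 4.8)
The plan is to imitate the classical Baird--Eells divergence computation for the stress--energy tensor, splitting $S_\Phi=e(\Phi)h-\Phi^*g$ and treating the two summands separately; by linearity of $\div S[\xi]$ in $S$ it suffices to handle $e(\Phi)h$ and $\Phi^*g$. Throughout I would work in a local $OSp_{(t,s)|2m}$-frame $\{e_j\}$ of $(M,h)$, write $\nabla=\nabla^h$ for the Levi-Civita connection of $h$, and use freely the expansion $v=\scal[h]{v}{e_j}Je_j=(-1)^{\abs{e_j}}\scal[h]{v}{Je_j}e_j$ together with the supertrace identity (\ref{eqnStraceMetric}).

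First I would compute the energy contribution. Since $e(\Phi)=\tfrac12\str_h\Phi^*g$ is even and $h$ is $\nabla$-parallel, the metric property of $h$ annihilates every term of $\scal[(\nabla_{e_i}(e(\Phi)h))]{\xi}{Je_i}$ except $e_i(e(\Phi))\cdot\scal[h]{\xi}{Je_i}$. Carrying the divergence sign $(-1)^{\abs{e_i}\abs{\xi}}$ of Def.~\ref{defDivergenceAlongPhi}, reordering the two scalar factors, and inserting the frame expansion of $\xi$ then gives $\div(e(\Phi)h)[\xi]=\xi(e(\Phi))$, exactly as in the ungraded case.

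Next I would treat the $\Phi^*g$ term. Rewriting $\scal[\Phi^*g]{Y}{Z}=\scal[g_{\Phi}]{d\Phi[Y]}{d\Phi[Z]}$ and differentiating with the metric property of the pullback connection (Lem.~\ref{lemMetricPullback}), the four resulting terms regroup, via Def.~\ref{defSecondFundForm}, so that after inserting the divergence sign and summing over $i$ one is left with $\div(\Phi^*g)[\xi]=T+\scal[g_{\Phi}]{d\Phi[\xi]}{\tau(\Phi)}$, where $T:=\scal[g_{\Phi}]{B_{\xi,e_i}}{d\Phi[Je_i]}$; here the tension-field term arises from $\tau(\Phi)=B_{e_i,Je_i}$ (Def.~\ref{defTensionField}) and the reduction to $T$ uses supersymmetry of $B$ (Lem.~\ref{lemSecondFundForm}). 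Combining the two steps, $\div S_\Phi[\xi]=\xi(e(\Phi))-T-\scal[g_{\Phi}]{d\Phi[\xi]}{\tau(\Phi)}$, so the claim reduces to the identity $T=\xi(e(\Phi))$.

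Finally I would prove $T=\xi(e(\Phi))$, which is where I expect the \emph{main obstacle}. Expanding $2\xi(e(\Phi))=\xi\scal[g_{\Phi}]{d\Phi[e_j]}{d\Phi[Je_j]}$ by the metric property of $g_{\Phi}$ and splitting each $\nabla_\xi d\Phi[\cdot]$ into its second-fundamental-form part $B_{\xi,\cdot}$ and a frame-derivative part $d\Phi[\nabla_\xi\,\cdot\,]$ produces four summands. Two of them equal $T$, being exchanged by (\ref{eqnStraceMetric}) (legitimate because $B_{\xi,\cdot}$ is tensorial by Lem.~\ref{lemSecondFundForm}); the remaining two are $\scal[\Phi^*g]{\nabla_\xi e_j}{Je_j}$ and $(-1)^{\abs{\xi}\abs{e_j}}\scal[\Phi^*g]{e_j}{\nabla_\xi(Je_j)}$, with $\nabla_\xi(Je_j)=J\nabla_\xi e_j$ since $J$ is $\nabla$-parallel. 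The crux is to show that these two frame-derivative terms cancel: this is the super-analogue of the classical fact that a skew-symmetric endomorphism paired into a symmetric form vanishes. Indeed, writing $\nabla_\xi e_j=e_k\,\omega^{k}_{j}$, the metricity of $\nabla$ together with the constancy of $\scal[h]{e_i}{e_j}$ in an $OSp_{(t,s)|2m}$-frame forces $\omega$ to take values in $\mathfrak{osp}_{(t,s)|2m}\otimes\mO_M(U)$, and contracting such an $\mathfrak{osp}$-valued endomorphism into the supersymmetric tensor $\Phi^*g$ through the $J$-twisted supertrace (\ref{eqnStraceMetric}) gives zero. Getting every Koszul sign right in this last cancellation, and tracking the interplay of the two frame expansions with the divergence factor $(-1)^{\abs{e_i}\abs{\xi}}$, is the delicate part; everything else is bilinearity and the metric identities already established.
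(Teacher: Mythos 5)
Your proposal is correct, and all three sub-computations check out: $\div(e(\Phi)h)[\xi]=\xi(e(\Phi))$, $\div(\Phi^*g)[\xi]=T+\scal[g_{\Phi}]{d\Phi[\xi]}{\tau(\Phi)}$ with $T=\scal[g_{\Phi}]{B_{\xi,e_i}(\Phi)}{d\Phi[Je_i]}$, and the identity $T=\xi(e(\Phi))$ then yields the lemma. The computational ingredients are the same as the paper's --- metricity of the pullback connection (Lem.~\ref{lemMetricPullback}), the splitting $\nabla_\xi d\Phi[\cdot]=B_{\xi,\cdot}(\Phi)+d\Phi[\nabla_\xi\,\cdot\,]$, supersymmetry of $B$ (Lem.~\ref{lemSecondFundForm}), the tension field (Def.~\ref{defTensionField}), and a final symmetric-versus-skew cancellation --- but your organisation is genuinely different: you run the classical Baird--Eells scheme modularly, treating $e(\Phi)h$ and $\Phi^*g$ separately and isolating $T=\xi(e(\Phi))$ as a stand-alone identity, whereas the paper performs one monolithic frame computation in which the energy terms and the $\Phi^*g$ terms cancel interleaved. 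What your version buys is transparency at exactly the point where the paper is tersest: the paper's proof ends by asserting that $\scal[g_{\Phi}]{d\Phi[\nabla_{\xi}e_j]}{d\Phi[Je_j]}$ ``vanishes by symmetry considerations'', and your $\mathfrak{osp}$-argument (metricity of $\nabla$ in an $OSp_{(t,s)|2m}$-frame forces the connection coefficients into $\mathfrak{osp}_{(t,s)|2m}\otimes\mO_M(U)$, and the $J$-twisted supertrace of a $g_0$-skew-adjoint endomorphism contracted into the supersymmetric form $\Phi^*g$ vanishes) is precisely the justification the paper leaves implicit.

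Two small inaccuracies, neither fatal. First, the parenthetical claim $\nabla_\xi(Je_j)=J\nabla_\xi e_j$ ``since $J$ is $\nabla$-parallel'' is false in general: $J$ is a frame-dependent object rather than a parallel tensor field, and an $\mathfrak{osp}$-valued matrix need not commute with $J$ (already classically, a boost in $\mathfrak{so}(1,1)$ fails to commute with $G_{1,1}$). You do not need this identity: $Je_j$ has constant coefficients in the frame, so $\nabla_\xi(Je_j)$ expands directly in terms of the $\nabla_\xi e_m$. Second, your two frame-derivative terms do not cancel against each other; by (\ref{eqnStraceMetric}) they are in fact \emph{equal}, and each vanishes individually --- by exactly the skew-versus-supersymmetric contraction argument you describe. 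Since that argument is what you propose to use, the proof goes through as written; only the word ``cancel'' is misleading.
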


\begin{proof}
We calculate
\begin{align*}
\div S_{\Phi}[\xi]
&=(-1)^{\abs{e_i}\abs{\xi}}e_i\scal[S_{\Phi}]{\xi}{Je_i}-(-1)^{\abs{e_i}\abs{\xi}}\scal[S_{\Phi}]{\nabla_{e_i}\xi}{Je_i}-\scal[S_{\Phi}]{\xi}{\nabla_{e_i}(Je_i)}\\
&=(-1)^{\abs{e_i}\abs{\xi}}e_i\left(\frac{1}{2}\scal[g_{\Phi}]{d\Phi[e_j]}{d\Phi[Je_j]}\scal[h]{\xi}{Je_i}-\scal[g_{\Phi}]{d\Phi[\xi]}{d\Phi[Je_i]}\right)\\
&\qquad-(-1)^{\abs{e_i}\abs{\xi}}e(\Phi)\scal[h]{\nabla_{e_i}\xi}{Je_i}-e(\Phi)\scal[h]{\xi}{\nabla_{e_i}(Je_i)}\\
&\qquad+(-1)^{\abs{e_i}\abs{\xi}}\scal[g_{\Phi}]{d\Phi[\nabla_{e_i}\xi]}{d\Phi[Je_i]}+\scal[g_{\Phi}]{d\Phi[\xi]}{d\Phi[\nabla_{e_i}(Je_i)]}\\
&=\frac{1}{2}\xi\circ\scal[g_{\Phi}]{d\Phi[e_j]}{d\Phi[Je_j]}+(-1)^{\abs{e_i}\abs{\xi}}e(\Phi)e_i\circ\scal[h]{\xi}{Je_i}\\
&\qquad-(-1)^{\abs{e_i}\abs{\xi}}e_i\circ\scal[g_{\Phi}]{d\Phi[\xi]}{d\Phi[Je_i]}\\
&\qquad-(-1)^{\abs{e_i}\abs{\xi}}e(\Phi)\scal[h]{\nabla_{e_i}\xi}{Je_i}-e(\Phi)\scal[h]{\xi}{\nabla_{e_i}(Je_i)}\\
&\qquad+(-1)^{\abs{e_i}\abs{\xi}}\scal[g_{\Phi}]{d\Phi[\nabla_{e_i}\xi]}{d\Phi[Je_i]}+\scal[g_{\Phi}]{d\Phi[\xi]}{d\Phi[\nabla_{e_i}(Je_i)]}
\end{align*}
Here, the second term cancels with the fourth and fifth such that
\begin{align*}
\div S_{\Phi}[\xi]
&=\frac{1}{2}\xi\circ\scal[g_{\Phi}]{d\Phi[e_j]}{d\Phi[Je_j]}-(-1)^{\abs{e_i}\abs{\xi}}e_i\circ\scal[g_{\Phi}]{d\Phi[\xi]}{d\Phi[Je_i]}\\
&\qquad+(-1)^{\abs{e_i}\abs{\xi}}\scal[g_{\Phi}]{d\Phi[\nabla_{e_i}\xi]}{d\Phi[Je_i]}+\scal[g_{\Phi}]{d\Phi[\xi]}{d\Phi[\nabla_{e_i}(Je_i)]}\\
&=\frac{1}{2}\scal[g_{\Phi}]{\nabla_{\xi}d\Phi[e_j]}{d\Phi[Je_j]}+\frac{1}{2}(-1)^{\abs{e_j}\abs{\xi}}\scal[g_{\Phi}]{d\Phi[e_j]}{\nabla_{\xi}d\Phi[Je_j]}\\
&\qquad-(-1)^{\abs{e_i}\abs{\xi}}\scal[g_{\Phi}]{\nabla_{e_i}d\Phi[\xi]}{d\Phi[Je_i]}-\scal[g_{\Phi}]{d\Phi[\xi]}{\nabla_{e_i}d\Phi[Je_i]}\\
&\qquad+(-1)^{\abs{e_i}\abs{\xi}}\scal[g_{\Phi}]{d\Phi[\nabla_{e_i}\xi]}{d\Phi[Je_i]}+\scal[g_{\Phi}]{d\Phi[\xi]}{d\Phi[\nabla_{e_i}(Je_i)]}
\end{align*}
By supersymmetry of $g_{\Phi}$ and $h$, we see that the first two terms coincide.
Moreover, we combine the third and fifth and the fourth and sixth terms, respectively, such that
\begin{align*}
\div S_{\Phi}[\xi]
&=\scal[g_{\Phi}]{\nabla_{\xi}d\Phi[e_j]}{d\Phi[Je_j]}-(-1)^{\abs{e_i}\abs{\xi}}\scal[g_{\Phi}]{(\nabla_{e_i}d\Phi)[\xi]}{d\Phi[Je_i]}\\
&\qquad-\scal[g_{\Phi}]{d\Phi[\xi]}{(\nabla_{e_i}d\Phi)[Je_i]}\\
&=\scal[g_{\Phi}]{(\nabla_{\xi}d\Phi)[e_j]-(-1)^{\abs{e_j}\abs{\xi}}(\nabla_{e_j}d\Phi)[\xi]+d\Phi[\nabla_{\xi}e_j]}{d\Phi[Je_j]}\\
&\qquad-\scal[g_{\Phi}]{d\Phi[\xi]}{\tau(\Phi)}\\
&=\scal[g_{\Phi}]{d\Phi[\nabla_{\xi}e_j]}{d\Phi[Je_j]}-\scal[g_{\Phi}]{d\Phi[\xi]}{\tau(\Phi)}
\end{align*}
using Lem. \ref{lemSecondFundForm}. Here, the first term vanishes by symmetry considerations.
\end{proof}

\begin{Lem}
\label{lemDivY}
Let $\xi\in\mS M$. Then, for $Y_{\xi}:=\scal[S_{\Phi}]{\xi}{e_i}Je_i$, we have
\begin{align*}
\div Y_{\xi}=\div S_{\Phi}[\xi]+\frac{1}{2}(-1)^{\abs{e_j}}\scal[L_{\xi}h]{e_i}{Je_j}\scal[S_{\Phi}]{e_j}{Je_i}
\end{align*}
\end{Lem}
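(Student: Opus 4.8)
The plan is to imitate the proof of Lem.~\ref{lemDivAlongMorphism}: first reduce $\div Y_\xi$ to the contraction defining $\div S_\Phi[\xi]$ plus a single remainder, and then identify that remainder with the symmetrised pairing of $S_\Phi$ against $L_\xi h$. The starting observation is that $Y_\xi$ is the $h$-metric dual of the one-form $Z\mapsto\scal[S_\Phi]{\xi}{Z}$: using the frame relation $\scal[h]{e_k}{Je_j}=(-1)^{\abs{e_k}}\delta_{kj}$ one checks $\scal[h]{Y_\xi}{Z}=\scal[S_\Phi]{\xi}{Z}$ for all $Z\in\mS M$. Applying Def.~\ref{defDivergence} and the metric property of the Levi-Civita connection $\nabla=\nabla^h$ on $(M,h)$ then gives, exactly as in the opening lines of Lem.~\ref{lemDivAlongMorphism},
\begin{align*}
\div Y_\xi=(-1)^{\abs{e_i}\abs{\xi}}e_i\scal[S_\Phi]{\xi}{Je_i}-\scal[S_\Phi]{\xi}{\nabla_{e_i}(Je_i)}.
\end{align*}

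Next I would insert the definition of $\nabla_{e_i}S_\Phi$, namely $\scal[(\nabla_{e_i}S_\Phi)]{\xi}{Je_i}=e_i\scal[S_\Phi]{\xi}{Je_i}-\scal[S_\Phi]{\nabla_{e_i}\xi}{Je_i}-(-1)^{\abs{e_i}\abs{\xi}}\scal[S_\Phi]{\xi}{\nabla_{e_i}(Je_i)}$, solve it for $e_i\scal[S_\Phi]{\xi}{Je_i}$, and substitute back. The two terms carrying $\nabla_{e_i}(Je_i)$ cancel, the contraction of $\nabla_{e_i}S_\Phi$ reassembles precisely into $\div S_\Phi[\xi]$, and one is left with
\begin{align*}
\div Y_\xi=\div S_\Phi[\xi]+(-1)^{\abs{e_i}\abs{\xi}}\scal[S_\Phi]{\nabla_{e_i}\xi}{Je_i}.
\end{align*}
Thus everything reduces to matching the remainder $(-1)^{\abs{e_i}\abs{\xi}}\scal[S_\Phi]{\nabla_{e_i}\xi}{Je_i}$ with the claimed $\tfrac12(-1)^{\abs{e_j}}\scal[L_{\xi}h]{e_i}{Je_j}\scal[S_\Phi]{e_j}{Je_i}$. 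These first two steps are routine and run parallel to Lem.~\ref{lemDivAlongMorphism}.

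For this last identity I would expand $\nabla_{e_i}\xi$ in the $OSp$-frame via $v=(-1)^{\abs{e_j}}\scal[h]{v}{Je_j}e_j$, so that the remainder becomes the full double contraction of the bilinear form $A(Y,Z):=\scal[h]{\nabla_Y\xi}{Z}$ against $S_\Phi$, schematically $\scal[h]{\nabla_{e_i}\xi}{Je_j}\scal[S_\Phi]{e_j}{Je_i}$. Since $S_\Phi=e(\Phi)h-\Phi^*g$ is supersymmetric (it is a difference of supersymmetric forms, $\Phi^*g$ being symmetric by Lem.~\ref{lemSecondFundForm} and the symmetry of $g$), this contraction only sees the graded-symmetric part of $A$. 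By Lem.~\ref{lemLieDerivativeLeviCivita} one has $\scal[L_{\xi}h]{Y}{Z}=(-1)^{\abs{\xi}\abs{Y}}A(Y,Z)+(-1)^{\abs{\xi}\abs{Z}+\abs{Y}\abs{Z}}A(Z,Y)$, i.e.\ $L_\xi h$ is exactly twice the (twisted) symmetrisation of $A$; this is where the factor $\tfrac12$ is born. Feeding the symmetrised contraction back into the frame yields the right-hand side.

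The main obstacle is the sign bookkeeping in this final step. One must verify both that the full contraction against the supersymmetric $S_\Phi$ equals the symmetric contraction — using the dual expansions $v=\scal[g]{v}{e_j}Je_j=(-1)^{\abs{e_j}}\scal[g]{v}{Je_j}e_j$ together with the relabelling $i\leftrightarrow j$ under $\scal[S_\Phi]{e_j}{Je_i}=(-1)^{\abs{e_i}\abs{e_j}}\scal[S_\Phi]{Je_i}{e_j}$ — and that the twisting prefactors $(-1)^{\abs{\xi}\abs{Y}}$ from Lem.~\ref{lemLieDerivativeLeviCivita} recombine with the $(-1)^{\abs{e_i}\abs{\xi}}$ in the remainder into the single prefactor $(-1)^{\abs{e_j}}$ of the stated formula. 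Everything else is a direct transcription of the classical computation $\div\bigl(S(\xi,\cdot)^{\sharp}\bigr)=(\div S)(\xi)+\tfrac12\langle S,L_\xi h\rangle$ into the graded setting.
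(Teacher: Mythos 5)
Your proposal is correct and takes essentially the same route as the paper's own proof: both arguments first reduce $\div Y_{\xi}$ to $\div S_{\Phi}[\xi]+(-1)^{\abs{e_i}\abs{\xi}}\scal[S_{\Phi}]{\nabla_{e_i}\xi}{Je_i}$ (you via the $h$-duality of $Y_{\xi}$ and metricity, the paper via a direct Leibniz expansion of $\nabla_{Je_j}Y_{\xi}$ --- an immaterial difference), and then both convert this remainder into $\tfrac{1}{2}(-1)^{\abs{e_j}}\scal[L_{\xi}h]{e_i}{Je_j}\scal[S_{\Phi}]{e_j}{Je_i}$ by expanding $\nabla_{e_i}\xi$ in the $OSp$-frame, symmetrising against the supersymmetric $S_{\Phi}$ with the relabelling $i\leftrightarrow j$ and the $e_k\leftrightarrow Je_k$ exchange, and applying Lem.~\ref{lemLieDerivativeLeviCivita}, exactly as the paper does. (One cosmetic slip: the supersymmetry of $\Phi^*g$ follows directly from that of $g_{\Phi}$, not from Lem.~\ref{lemSecondFundForm}, which concerns the second fundamental form.)
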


\begin{proof}
We calculate
\begin{align*}
\div Y_{\xi}
&=(-1)^{\abs{e_j}(1+\abs{\xi})}\scal[h]{\nabla_{Je_j}[\scal[S_{\Phi}]{\xi}{e_i}Je_i]}{e_j}\\
&=(-1)^{\abs{e_j}(1+\abs{\xi})}Je_j\scal[S_{\Phi}]{\xi}{e_i}\cdot\scal[h]{Je_i}{e_j}\\
&\qquad\qquad\qquad+(-1)^{\abs{e_j}+\abs{e_j}\abs{e_i}}\scal[S_{\Phi}]{\xi}{e_i}\scal[h]{\nabla_{Je_j}(Je_i)}{e_j}\\
&=(-1)^{\abs{e_i}(1+\abs{\xi})}Je_i\scal[S_{\Phi}]{\xi}{e_i}
+(-1)^{\abs{e_j}+\abs{e_j}\abs{e_i}}\scal[S_{\Phi}]{\xi}{e_i}\scal[h]{\nabla_{Je_j}(Je_i)}{e_j}\\
&=(-1)^{\abs{e_i}(1+\abs{\xi})}\scal[(\nabla_{Je_i}S_{\Phi})]{\xi}{e_i}
+(-1)^{\abs{e_i}(1+\abs{\xi})}\scal[S_{\Phi}]{\nabla_{Je_i}\xi}{e_i}\\
&\qquad+(-1)^{\abs{e_i}}\scal[S_{\Phi}]{\xi}{\nabla_{Je_i}e_i}
-(-1)^{\abs{e_j}}\scal[S_{\Phi}]{\xi}{e_i}\scal[h]{Je_i}{\nabla_{Je_j}e_j}\\
&=(-1)^{\abs{e_i}(1+\abs{\xi})}\scal[(\nabla_{Je_i}S_{\Phi})]{\xi}{e_i}
+(-1)^{\abs{e_i}(1+\abs{\xi})}\scal[S_{\Phi}]{\nabla_{Je_i}\xi}{e_i}\\
&=\div S_{\Phi}[\xi]+(-1)^{\abs{e_i}\abs{\xi}}\scal[S_{\Phi}]{\nabla_{e_i}\xi}{Je_i}
\end{align*}
The second term can be transformed as follows. We use relabelling of summation indices
as well as exchange of $e_k$ and $Je_k$ with appropriate sign such that
\begin{align*}
(2)
&=(-1)^{\abs{e_i}\abs{\xi}}\scal[S_{\Phi}]{e_j\cdot\scal[h]{Je_j}{\nabla_{e_i}\xi}}{Je_i}\\
&=(-1)^{\abs{e_i}\abs{\xi}+\abs{e_j}+\abs{e_j}\abs{e_i}+\abs{e_j}\abs{\xi}}\scal[h]{Je_j}{\nabla_{e_i}\xi}\scal[S_{\Phi}]{e_j}{Je_i}\\
&=(-1)^{\abs{e_i}\abs{\xi}+\abs{e_j}\abs{e_i}+\abs{e_j}\abs{\xi}}\frac{1}{2}
\left((-1)^{\abs{e_j}}\scal[h]{Je_j}{\nabla_{e_i}\xi}\scal[S_{\Phi}]{e_j}{Je_i}\right.\\
&\qquad\qquad\qquad\qquad\qquad\qquad\left.+(-1)^{\abs{e_j}+\abs{e_i}\abs{e_j}}\scal[h]{e_i}{\nabla_{Je_j}\xi}\scal[S_{\Phi}]{e_j}{Je_i}\right)\\
&=\frac{1}{2}(-1)^{\abs{e_j}}\left((-1)^{\abs{e_i}\abs{\xi}}\scal[h]{\nabla_{e_i}\xi}{Je_j}
+(-1)^{\abs{e_j}\abs{\xi}+\abs{e_i}\abs{e_j}}\scal[h]{\nabla_{Je_j}\xi}{e_i}\right)\scal[S_{\Phi}]{e_j}{Je_i}\\
&=\frac{1}{2}(-1)^{\abs{e_j}}\scal[L_{\xi}h]{e_i}{Je_j}\scal[S_{\Phi}]{e_j}{Je_i}
\end{align*}
The last equation holds by Lem. \ref{lemLieDerivativeLeviCivita}.
\end{proof}

\begin{Thm}[Noether]
Let $\xi\in\mS M$ be a Killing vector field and $\Phi$ be superharmonic.
Then $\div Y_{\xi}=0$ vanishes.
\end{Thm}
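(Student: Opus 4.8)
The plan is to read off the result as an immediate consequence of the two preceding lemmas, with the two hypotheses tailored precisely to annihilate the two terms produced by Lemma \ref{lemDivY}. First I would recall the decomposition furnished by that lemma,
\begin{align*}
\div Y_{\xi}=\div S_{\Phi}[\xi]+\frac{1}{2}(-1)^{\abs{e_j}}\scal[L_{\xi}h]{e_i}{Je_j}\scal[S_{\Phi}]{e_j}{Je_i},
\end{align*}
which splits $\div Y_{\xi}$ into a ``stress-energy divergence'' term and a contraction weighted by the Lie derivative $L_{\xi}h$.

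The second, $L_{\xi}h$-weighted summand is handled by the Killing hypothesis: since $\xi$ is Killing with respect to the domain metric $h$, we have $L_{\xi}h=0$ by definition, so $\scal[L_{\xi}h]{e_i}{Je_j}=0$ for every pair of frame indices and the entire second summand vanishes identically. For the first summand I would invoke Lemma \ref{lemStressEnergy}, which computes the stress-energy divergence as $\div S_{\Phi}[\xi]=-\scal[g_{\Phi}]{d\Phi[\xi]}{\tau(\Phi)}$. Superharmonicity of $\Phi$ means precisely that the tension field vanishes, $\tau(\Phi)=0$ (by Theorem \ref{thmSuperEulerLagrange}), so this summand vanishes as well. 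Combining the two, $\div Y_{\xi}=0$, as claimed.

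The substance of the argument was already discharged in establishing Lemmas \ref{lemStressEnergy} and \ref{lemDivY}, so I do not expect a genuine obstacle here; the only point worth flagging is conceptual rather than computational. Unlike Theorem \ref{thmNoetherDomain}, which assumes the weaker $\Phi$-dependent condition $L_{\xi}(\Phi^*g)=0$, the present theorem uses the genuine Killing property $L_{\xi}h=0$ together with superharmonicity, and it is exactly this pairing that lets the two hypotheses dispatch the two terms separately. The one thing I would double-check is that the symmetry cancellations performed inside the proof of Lemma \ref{lemDivY} (the exchange of $e_k$ and $Je_k$ with the correct signs) are indeed what reduce the frame-dependent contraction to the clean form $\scal[L_{\xi}h]{e_i}{Je_j}\scal[S_{\Phi}]{e_j}{Je_i}$, since it is this specific shape that makes the Killing hypothesis applicable; but that verification lives in that lemma's proof and need not be repeated.
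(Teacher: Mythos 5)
Your proposal is correct and coincides with the paper's own proof: Lemma \ref{lemDivY} together with $L_{\xi}h=0$ reduces $\div Y_{\xi}$ to $\div S_{\Phi}[\xi]$, which Lemma \ref{lemStressEnergy} and $\tau(\Phi)=0$ then make vanish. No gaps; the argument is exactly the intended two-line application of the preceding lemmas.
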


\begin{proof}
By Lem. \ref{lemDivY} and the Killing property of $\xi$, we have $\div Y_{\xi}=\div S_{\Phi}[\xi]$.
The statement now follows directly from Lem. \ref{lemStressEnergy}.
\end{proof}

\addcontentsline{toc}{section}{References}

\bibliographystyle{alpha}

\begin{thebibliography}{ACDS97}

\bibitem[ACDS97]{ACDS97}
D.~Alekseevsky, V.~Cortés, C.~Devchand, and U.~Semmelmann.
\newblock Killing spinors are {K}illing vector fields in {R}iemannian
  supergeometry.
\newblock {\em Journal of Geometry and Physics}, 25(3), 1997.

\bibitem[BCF09]{BCF09}
L.~Balduzzi, C.~Carmeli, and R.~Fioresi.
\newblock Quotients in supergeometry.
\newblock In D.~Babbitt V.~Chari and R.~Fioresi, editors, {\em Symmetry in
  Mathematics and Physics}, volume 490. American Mathematical Society, 2009.

\bibitem[BE81]{BE81}
P.~Baird and J.~Eells.
\newblock A conservation law for harmonic maps.
\newblock In E.~Looijenga, D.~Siersma, and F.~Takens, editors, {\em Geometry
  Symposium Utrecht 1980}, volume 894 of {\em Lecture Notes in Mathematics},
  pages 1--25. Springer, 1981.

\bibitem[CCF11]{CCF11}
C.~Carmeli, L.~Caston, and R.~Fioresi.
\newblock {\em Mathematical Foundations of Supersymmetry}.
\newblock European Mathematical Society, 2011.

\bibitem[DeW84]{DeW84}
B.~DeWitt.
\newblock {\em Supermanifolds}.
\newblock Cambridge University Press, 1984.

\bibitem[DF99]{DF99b}
P.~Deligne and D.~Freed.
\newblock Supersolutions.
\newblock In P.~Deligne et~al., editor, {\em Quantum Fields and Strings: A
  Course for Mathematicians}. American Mathematical Society, 1999.

\bibitem[DM99]{DM99}
P.~Deligne and J.~Morgan.
\newblock Notes on supersymmetry.
\newblock In P.~Deligne et~al., editor, {\em Quantum Fields and Strings: A
  Course for Mathematicians}. American Mathematical Society, 1999.

\bibitem[Goe08]{Goe08}
O.~Goertsches.
\newblock Riemannian supergeometry.
\newblock {\em Math. Z.}, 260(3):557--593, 2008.

\bibitem[Gro11]{Gro11a}
J.~Groeger.
\newblock Holomorphic supercurves and supersymmetric sigma models.
\newblock {\em Journal of Mathematical Physics}, 52(12), 2011.

\bibitem[GW12]{GW12}
S.~{Garnier} and T.~{Wurzbacher}.
\newblock {The geodesic flow on a Riemannian supermanifold}.
\newblock {\em Journal of Geometry and Physics}, 62:1489--1508, June 2012.

\bibitem[Han12]{Han12}
Florian Hanisch.
\newblock Variational problems on supermanifolds.
\newblock Dissertation, Universität Potsdam, 2012.

\bibitem[Har90]{Har90}
F.~Harvey.
\newblock {\em Spinors and Calibrations}.
\newblock Academic Press, 1990.

\bibitem[H{\'e}l02]{Hel02}
F.~H{\'e}lein.
\newblock {\em Harmonic Maps, Conservation Laws and Moving Frames}.
\newblock Cambridge Tracts in Mathematics. Cambridge University Press, 2002.

\bibitem[H{\'{e}}l09]{Hel09}
F.~H{\'{e}}lein.
\newblock An introduction to supermanifolds and supersymmetry.
\newblock In P.~Baird, F.~Hélein, J.~Kouneiher, F.~Pedit, and V.~Roubtsov,
  editors, {\em Systèmes intégrables et théorie des champs quantiques},
  pages 103--157. Hermann, 2009.

\bibitem[Jos01]{Jos01}
J.~Jost.
\newblock {\em Bosonic Strings: A Mathematical Treatment}.
\newblock Ams/IP Studies in Advanced Mathematics. American Mathematical
  Society, 2001.

\bibitem[Khe07]{Khe07}
I.~Khemar.
\newblock Supersymmetric harmonic maps into symmetric spaces.
\newblock {\em Journal of Geometry and Physics}, 57(8):1601 -- 1630, 2007.

\bibitem[Kli05]{Kli05}
Frank Klinker.
\newblock Supersymmetric {K}illing structures.
\newblock {\em Communications in Mathematical Physics}, 255:419--467, 2005.

\bibitem[Lei80]{Lei80}
D.~Leites.
\newblock Introduction to the theory of supermanifolds.
\newblock {\em Russian Math. Surveys}, 35(1), 1980.

\bibitem[MS00]{MS00}
A.~{Moroianu} and U.~{Semmelmann}.
\newblock {Parallel spinors and holonomy groups}.
\newblock {\em Journal of Mathematical Physics}, 41:2395--2402, 2000.

\bibitem[MSV93]{MSV93}
J.~Monterde and O.~Sánchez-Valenzuela.
\newblock Existence and uniqueness of solutions to superdifferential equations.
\newblock {\em Journal of Geometry and Physics}, 10(4):315 -- 343, 1993.

\bibitem[MSV96]{MSV96}
J.~Monterde and O.~Sánchez-Valenzuela.
\newblock The exterior derivative as a {K}illing vector field.
\newblock {\em Israel Journal of Mathematics}, 93:157--170, 1996.

\bibitem[PS95]{PS95}
M.~Peskin and D.~Schroeder.
\newblock {\em An Introduction to Quantum Field Theory}.
\newblock Perseus Books, 1995.

\bibitem[Sha88]{Sha88}
V.~Shander.
\newblock Orientations of supermanifolds.
\newblock {\em Functional Analysis and its Applications}, 22(1):80--82, 1988.

\bibitem[Ten75]{Ten75}
B.~Tennison.
\newblock {\em Sheaf Theory}.
\newblock Cambridge University Press, 1975.

\bibitem[Var04]{Var04}
V.~Varadarajan.
\newblock {\em Supersymmetry for Mathematicians: An Introduction}.
\newblock American Mathematical Society, 2004.

\bibitem[War83]{War83}
F.~Warner.
\newblock {\em Foundations of Differentiable Manifolds and Lie Groups}.
\newblock Springer, 1983.

\bibitem[Xin96]{Xin96}
Y.~Xin.
\newblock {\em Geometry of Harmonic Maps}.
\newblock Birkh{\"a}user, 1996.

\end{thebibliography}

\end{document}